\newtheorem{prop}{Proposition}
\begin{document}








\title{Skyline-based exploration of temporal property graphs}

\author{
  Evangelia Tsoukanara\textsuperscript{1,*},
  Georgia Koloniari\textsuperscript{2},
  Evaggelia Pitoura\textsuperscript{3}
}

\institute{Evangelia Tsoukanara \at
              University of Macedonia, Thessaloniki, Greece \\
              \email{etsoukanara@uom.edu.gr}           
           \and
           Georgia Koloniari \at
              University of Macedonia, Thessaloniki, Greece \\
              \email{gkoloniari@uom.edu.gr}
           \and
           Evaggelia Pitoura \at
              University of Ioannina, Ioannina, Greece \\
              \email{pitoura@cse.uoi.gr} 
}

\date{Received: date / Accepted: date}

\maketitle

\begin{flushleft}
  \textsuperscript{1}University of Macedonia, Thessaloniki, Greece\\
  \email{etsoukanara@uom.edu.gr}\\
  ORCID: 0000-0002-0720-7608\orcidlink{0000-0002-0720-7608}\\
  \textsuperscript{2}University of Macedonia, Thessaloniki, Greece\\
  \email{gkoloniari@uom.edu.gr}\\
  ORCID: 0000-0002-9852-7980\orcidlink{0000-0002-9852-7980}\\
  \textsuperscript{3}University of Ioannina, Ioannina, Greece\\ \email{pitoura@cse.uoi.gr}\\
  ORCID: 0000-0002-3775-4995\orcidlink{0000-0002-3775-4995}\\
  \vspace{10pt}
  *Corresponding author: etsoukanara@uom.edu.gr
\end{flushleft}

\newpage


\begin{abstract}

In this paper, we focus on temporal property graphs, that is, property graphs whose labeled nodes and  edges as well as the values of the properties associated with them may change with time. For instance,  consider a bibliographic network, with nodes representing authors and conferences with properties such as gender and location respectively, and edges representing collaboration between authors and publications in conferences. 
A key challenge in studying temporal graphs lies in detecting interesting events in their evolution, defined as time intervals of significant stability, growth, or shrinkage.
To address this challenge, we build aggregated graphs, where nodes are grouped based on the values of their properties, and seek events at the aggregated level, for example, time intervals of significant growth in the collaborations between authors of the same gender. 
To locate such events, we  propose a novel approach based on \textit{unified evolution skylines}. A unified evolution skyline assesses the significance of an event in conjunction with the duration of the interval in which the event occurs.
Significance is measured by a set of counts, where each count refers to the number of graph elements that remain stable, are created, or deleted, for a specific property value.  
For example, for property gender, we measure the number of female-female, female-male, and male-male collaborations. 
Lastly, we share experimental findings that highlight the efficiency and effectiveness of our approach.

\keywords{temporal property graph \and skylines \and exploration \and aggregation \and temporal evolution}
\end{abstract}

\section{Introduction}

Property graphs are often used to model relationships and interactions between real world entities. They consist of  nodes and edges of diverse types, called labels.
Both nodes and edges have properties.  
Property graphs may evolve over time with nodes and edges being added, or deleted. The value of properties may also change with time. We call such graphs \textit{temporal property graphs}.
A key challenge in this context is detecting important events in the evolution of temporal property graphs, such as periods of significant stability, growth or shrinkage. It is interesting to detect such events not only at the individual node and edge level, but also at a higher level. To detect events at a higher level, we build aggregated property graphs where each node corresponds to sets of nodes that have the same property values in the original graph, for a given node label. 

For example, consider a face-to-face proximity graph of students and teachers used for studying the spread of an infectious disease \cite{Gemmetto14}. Nodes are labeled as \textit{individual} with properties such as class, grade, and gender, while edges are labeled as \textit{interact} and correspond to physical interactions between individuals. Detecting events in the evolution of this graph may reveal patterns in the disease spread that will facilitate the application of targeted mitigation strategies. Such patterns may emerge not only at the individual node level, but also at an aggregated one, for example, between students of the same class, or gender. For instance, a pattern may be a correlation between the growth of interactions between 1st graders and an increase in the disease spread, or a correlation between the shrinkage of interactions between boys and a drop in the disease spread.

We adopt an interval-based model, where a temporal property graph is modeled as a property graph whose nodes, edges and properties are annotated with the time intervals during which the corresponding elements were valid. 
We model the evolution of a temporal property graph (including aggregated ones) between two time intervals $T_1$ and $T_2$, where $T_1$ precedes $T_2$ through three \textit{event} graphs: (a) the \textit{stability graph}, that includes nodes and edges that persist in both $T_1$ and $T_2$, (b) the \textit{growth graph} that includes new nodes and edges that did not exist in $T_1$ but appear in $T_2$ and (c)  the \textit{shrinkage graph} that includes deleted nodes and edges that existed in $T_1$ but no longer appear in $T_2$.
Our goal is to detect intervals in each of these graphs when significant events occur, where significance is measured by the number of graph elements that remain stable, are created, or deleted respectively.

Previous work on evolution exploration is driven either by user queries \cite{Rost22}, or
requires a threshold on the number of events \cite{Tsoukanara23}. Both approaches require knowledge of the underlying data to achieve appropriate parameter configuration or create a meaningful query. In this paper, we  adopt a parameter-free novel approach designed for temporal property graphs based on \textit{evolution skylines}.
An evolution skyline is defined on the significance of the event (i.e., number of affected graph elements) and the length of the interval when the event appears. 
An \textit{individual evolution skyline} has two dimensions, namely the length of the interval and the number of affected graph elements with a specific property value, for example, for growth, the number of new interactions i.e., between female students.  \textit{Unified evolution skylines} extend the individual evolution skylines to more than two dimensions. Concretely in a unified evolution skyline, we keep the number of affected graph elements for each possible property value, for example, for growth, we have three numbers, i.e, the number of interactions between female students, between male students and between female and male students.

 
The preferred length of the interval in a skyline depends on semantics.
In this paper, we use two different semantics namely: \textit{strict} and \textit{loose} semantics. Strict semantics capture persistent occurrences of nodes and edges, while loose semantics capture transient changes. 
For example, we are interested in the shortest intervals in the case of loose semantics and shrinkage, so as to capture abrupt changes in the graphs, and in the longest intervals in the case of strict semantics and stability, so as to capture persistence.

In particular, assume that  in the face-to-face
proximity graph,  there is a unified skyline point for property gender with an interval of length
10 and with significance of 30 edges between boys, 20 edges between girls and 15 edges between boys and girls, for the case of stability and
strict semantics. This means that: (a) there is no longer interval where the same number or more edges between boys, or girls, or boys and girls  remained stable, and (b) there is no interval with length 10 or more where more edges between boys, or girls or boys and girls remained stable.

We also extend our unified skyline-based strategy to detect the top-$k$ unified skyline where tuples are ranked based on the number of tuples each one dominates. Finally, we present an experimental evaluation of our approach using three real datasets, aiming at studying both the efficiency and the effectiveness of our approach.


The rest of the paper is structured as follows. In Section 2, we introduce necessary concepts, while in Section 3, we define unified skylines. In Section 4, we present the skyline computation, and Section 5 includes our experimental results. Section 6 summarizes related work and Section 7 offers conclusions.

\section{The temporal property graph model}
A property graph is a directed labeled graph where each node or edge can have a set of properties represented as key-value pairs \cite{Angles18}. \textit{Temporal property graphs} are property graphs that describe the changes occurring in the graph structure and properties over time, capturing the evolution of a graph.



Specifically, we define a temporal property graph based on the TGraph model \cite{Moffitt17}, additionally, supporting property values that can change with time. Let $\mathcal{L}$ be a set of node and edge labels, $\mathcal{P}$  a set of properties, $\mathcal{V}$ a set of property values, and $T$ a set of time intervals, henceforth termed as \textit{temporal element}.

\begin{definition}[Temporal Property Graph] A temporal property graph is defined as a tuple $G[T] = (N, E, \rho, \lambda, \xi^T, \sigma^T)$, where:
\begin{itemize}
\item $N$ is a set of nodes;
\item $E$ is a set of edges;
\item $\rho: E \rightarrow (N \times N)$ is a function that maps an edge to its source and destination nodes;
\item $\lambda: (N \cup E) \rightarrow \mathcal{L}$ is a function that maps a node or a an edge to a label;
\item $\xi^T: (N \cup E) \rightarrow T'$ is a function that maps a node or an edge, to a set $T'$ of time subintervals of $T$, indicating existence of the node or the edge at these subintervals; and
\item $\sigma^T: (N \cup E) \times \mathcal{P} \times t \rightarrow \mathcal{V}$ is a partial function that maps a node or an edge, a property, and a time instance $t \in T$ to a value of the property at this time instance.
 \end{itemize}
\end{definition}

Additionally, it holds that a property can have a value only when the relevant node or edge exists, and that a property has necessarily a value assigned when the corresponding node or edge exists. 
Our model supports both \textit{static} properties for which $\sigma^T$ assigns the same value for all $t \in T$, and \textit{time-varying} properties whose values change during $T$. 

Although our model supports properties for both nodes and edges, in this paper, we focus on node properties. Also, hereafter, we collectively refer to nodes and edges of a graph as \textit{graph elements}. 

Figure \ref{fig:tpg} depicts an example of a temporal property graph for a bibliographic network. We have nodes with two different labels, \textit{author} and \textit{conference}. Authors have two properties, a static one, \textit{gender}, and a time-varying one, \textit{\#publications}, denoting the number of publications of the author per year. Conferences have two properties, one static, \textit{topic}, and one time-varying, \textit{location}. All nodes also include a unique ID property. For simplicity, we associate this property with the temporal elements in which the corresponding node exists. The other property values are also similarly annotated with temporal elements. For example, node with ID 1 exists during [1, 2], is associated with the \textit{author} label, has the static property \textit{gender} with value \textit{male}, and the time-varying property \textit{\#publications} with value 1 for the interval [1], and 2 for the interval [2]. 
Edges in the graph have two labels: \textit{publish} that describe relationships between authors and conferences, and \textit{collaborate} between authors, both also annotated with their corresponding temporal elements. For example, there is a \textit{publish} labeled edge between \textit{author} node with ID 1 and \textit{conference} node with ID 5, denoting that the author published at this conference in interval [2]. Note, that we denote an interval that includes a single time point, e.g. $[1, 1]$, simply as $[1]$.




\begin{figure}
\centering
\includegraphics[scale=0.07]{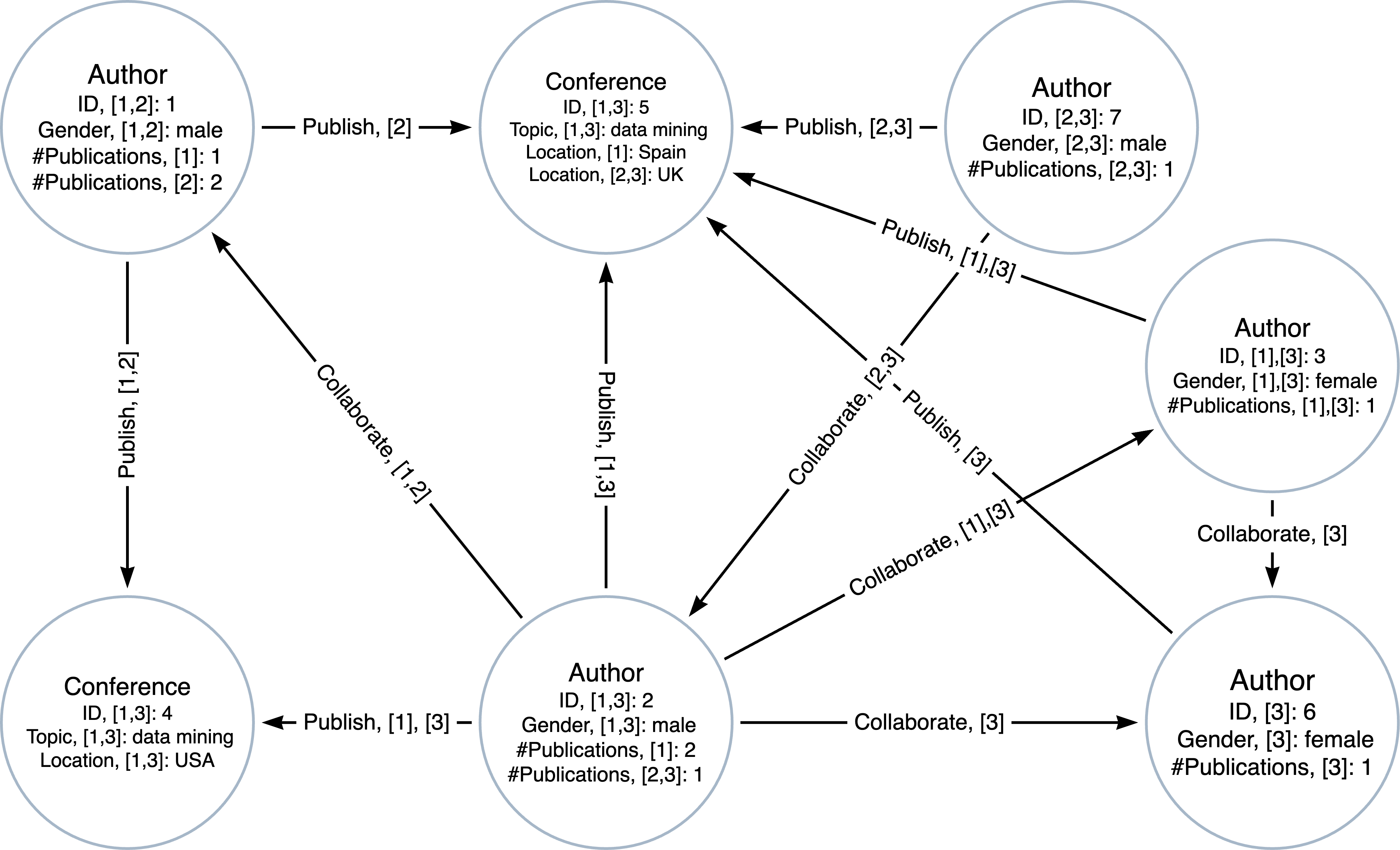}
\caption{An example of a temporal property graph for a bibliographic network defined on $T = [1, 3]$.}
\label{fig:tpg}
\end{figure}

\subsection{Temporal graph operators}
To manage and explore the temporal dimension of our graph model, we define three different set-based operators that take as input two temporal property graphs and produce a new temporal property graph. 

Specifically, given two temporal property graphs $G[T_1]$ and $G[T_2]$, the \textit{union} ($\cup$) operator $G[T_1]\cup G[T_2]$ generates a new temporal property graph with graph elements that appear either in $G[T_1]$ or $G[T_2]$, i.e., graph elements that exist in any time instance of $T_1$ or $T_2$. The \textit{intersection} ($\cap$) operator $G[T_1]\cap G[T_2]$ produces a temporal property graph whose elements appear in both $G[T_1]$ and $G[T_2]$. Finally, the \textit{difference} ($-$) $G[T_1] - G[T_2]$ outputs a temporal property graph that includes graph elements that appear in $G[T_1]$ but not in $G[T_2]$.
Note that the definition of the temporal operators is based on the union, intersection and difference of graph elements, that is, of the sets of nodes and edges. 

\begin{figure}
\centering
\begin{subfigure}[b]{1\textwidth}
\includegraphics[scale=0.07]{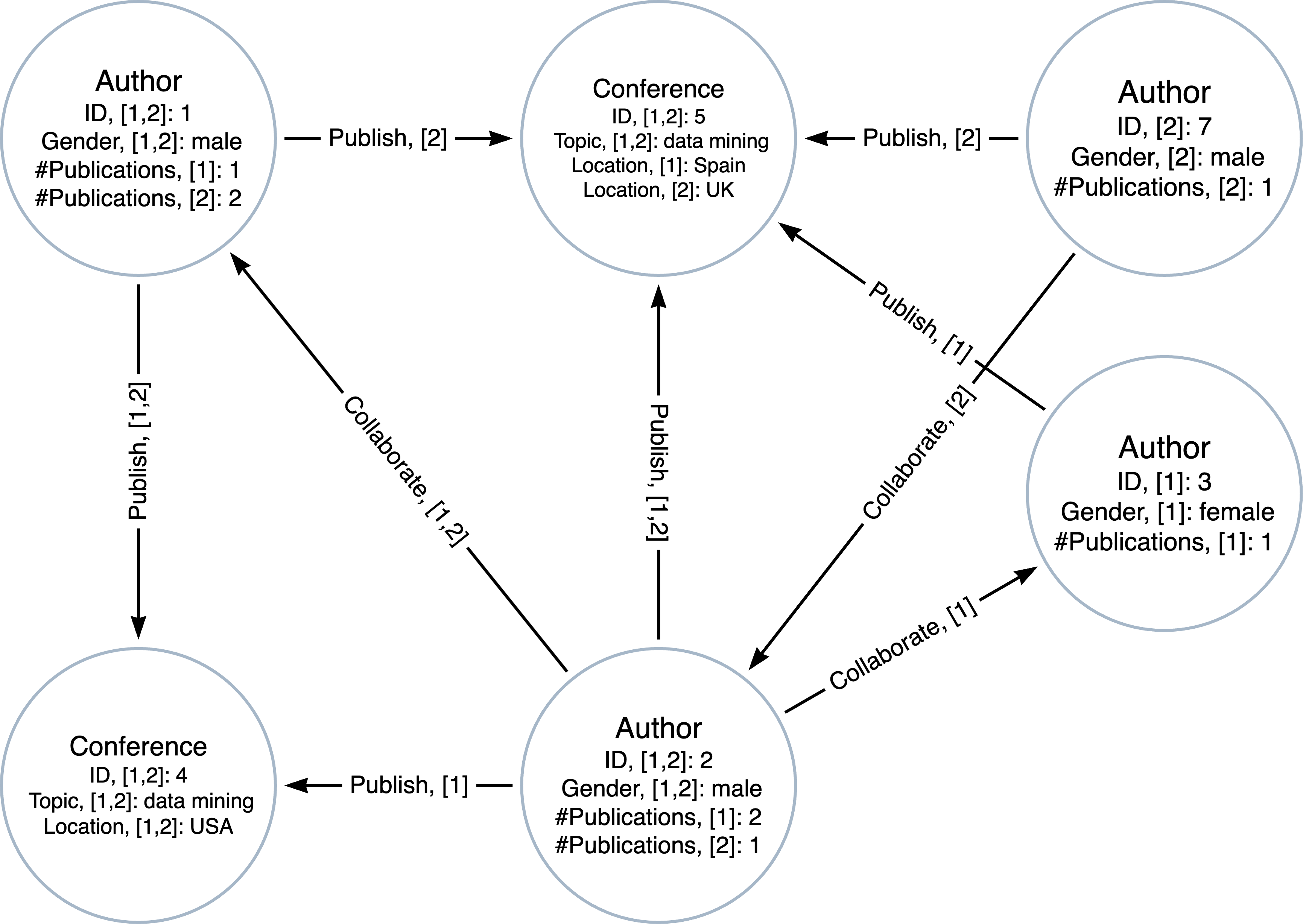}
\centering
\caption{}
\label{fig:sem.1}
\end{subfigure}
\begin{subfigure}[b]{1\textwidth}
\includegraphics[scale=0.07]{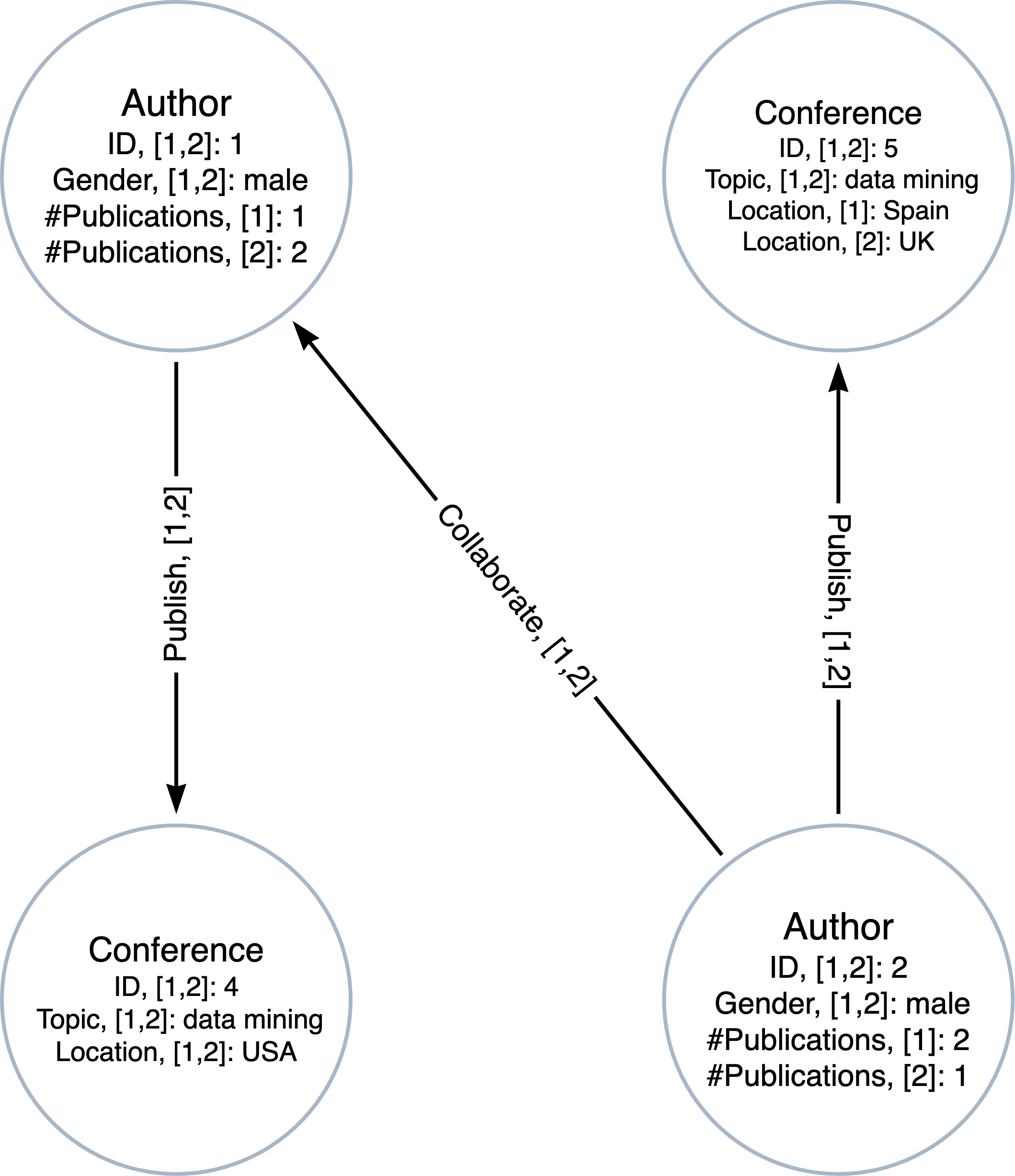}
\centering
\caption{}
\label{fig:sem.2}
\end{subfigure}
\caption{$G[T_1 \sqcup T_2]$ with (a) loose and (b) strict semantics.}
\label{fig:sem}
\end{figure}

In addition, given a  sequence  of graphs  defined at  different temporal elements, we  would like  to combine them  into  a single temporal graph. To this end, we introduce \textit{strict} and \textit{loose} semantics.
Strict semantics retain continuous occurrences of graph elements, while loose semantics retain also transient ones. Intuitively, persistence is associated with the notion of intersection which focuses on finding continuous occurrences, and therefore, consistency over a period. On the other hand, union adopts a more loose strategy and focuses on all graph elements ever appearing during a period of time.

Concretely, for two temporal elements, $T_1$ and $T_2$, let us denote the combined graph as $G[T_1 \sqcup T_2]$. Graph $G[T_1 \sqcup T_2]$ is defined with: (a) \textit{strict semantics} as $G[T_1] \cap G[T_2]$, i.e., we want a graph element to appear in all time instances in $T_1 \cup T_2$  and (b) \textit{loose semantics}, as $G[T_1] \cup G[T_2]$, i.e., we want a graph element to appear in at least one time instance in $T_1 \cup T_2$. 

An example of $G[T_1 \sqcup T_2]$ where $T_1$ is time instance 1 and $T_2$ is time instance 2 defined with loose semantics is shown in Fig. \ref{fig:sem.1} and with strict semantics in Fig. \ref{fig:sem.2}. The graph of Fig. \ref{fig:sem.1} depicts nodes and edges that appear in either time instance 1 or 2, while the graph of Fig. \ref{fig:sem.2} includes only nodes and edges that appear in both 1 and 2. For example, node with ID 3 appears in the graph of Fig. \ref{fig:sem.1} but not in that of Fig. \ref{fig:sem.2} as it exists only in time instance 2.

\subsection{Graph evolution and aggregation}

Our goal is to capture the evolution of a graph between two consequent time intervals $T_1$ and $T_2$ so as to detect whether graph elements remained stable in both intervals, new graph elements appeared in the most recent one, or existing ones disappeared in the most recent one.  Thus, we want to study respectively \textit{stability}, \textit{growth} and \textit{shrinkage}, collectively called \textit{events}.

To model graph evolution with respect to these three types of events, we exploit the temporal operators and define the following three event graphs. For a pair of time intervals $T_1$, $T_2$ where $T_2$ follows $T_1$: (a) the stability graph  ($s$-graph), denoted as  $G_s[(T_2, T_1)]$, is defined as $G[T_1] \cap G[T_2]$ and captures stable graph elements that appear in both $T_1$ and $T_2$,
(b) the shrinkage graph  ($h$-graph),  $G_h[(T_2, T_1)]$, is defined as $G[T_1] - G[T_2]$ and captures elements that disappear from $T_1$ to $T_2$ and (c) the growth graph ($g$-graph), $G_g[(T_2, T_1)]$, is defined as $G[T_2] - G[T_1]$ and captures graph elements that are new to $T_2$ and did not appear in $T_1$. We use $\gamma$ to denote any of the three events.

For example, $G_s[(3, [1 \sqcup 2])]$, for the graph of Fig. \ref{fig:tpg} denotes the part of the graph that did not change between the graph in time interval $[1 \cup 2]$ and the graph in time instance 3 shown in Fig. \ref{fig:t3}.
In case of strict semantics, the graph elements are the nodes and edges that existed in both time instances 1 and 2 (Fig. \ref{fig:sem.2}) and also remain in time instance 3. In case of loose semantics, the graph elements are the nodes and edges that existed in at least one of the time instances 1 and 2 (Fig. \ref{fig:sem.1}) and also remain in time instance 3.


Besides studying the evolution of the graph at the individual node and edge level, we want to study the evolution at a higher granularity level. For example, instead of studying stable interactions between individuals, we may want to study stable interactions between individuals of the same gender, or age. To this end, we employ graph aggregation where given a set of labels $L \subseteq \mathcal{L}$ and properties $P$ associated with these labels, we group nodes that have the same values for these properties, while respecting the network structure.
The resulting property graph consists of two types of nodes: (a) aggregated nodes that correspond to sets of nodes in the original graph whose label belong to $L$ and have the same property values, and (b) nodes of the original graph whose label does not belong to $L$ and thus are not aggregated.

\begin{figure}
\centering
\includegraphics[scale=0.07]{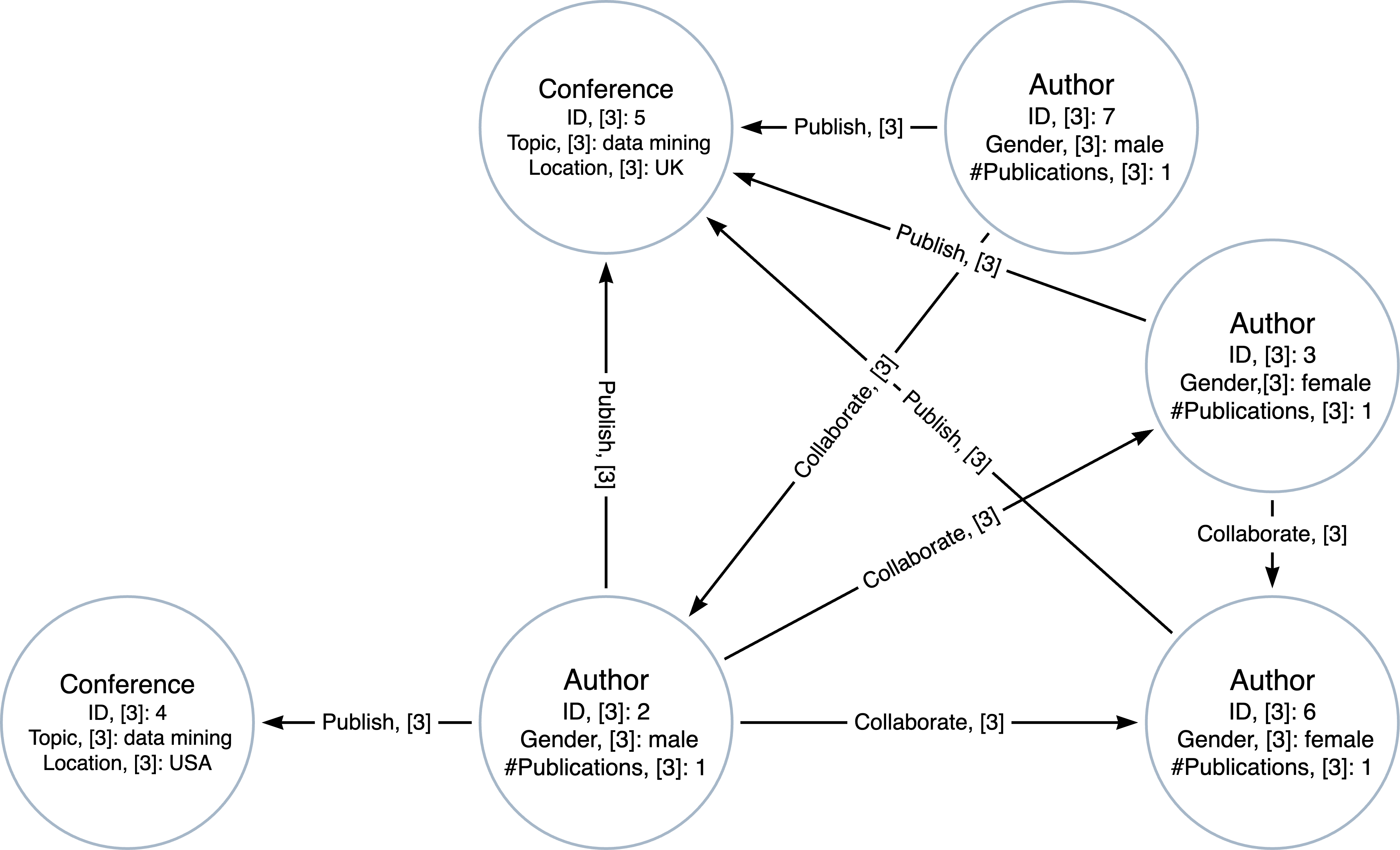}
\caption{Graph instance on time point $3$ for the graph of Fig. \ref{fig:tpg}.}
\label{fig:t3}
\end{figure}

\begin{figure}
\centering
\includegraphics[scale=0.07]{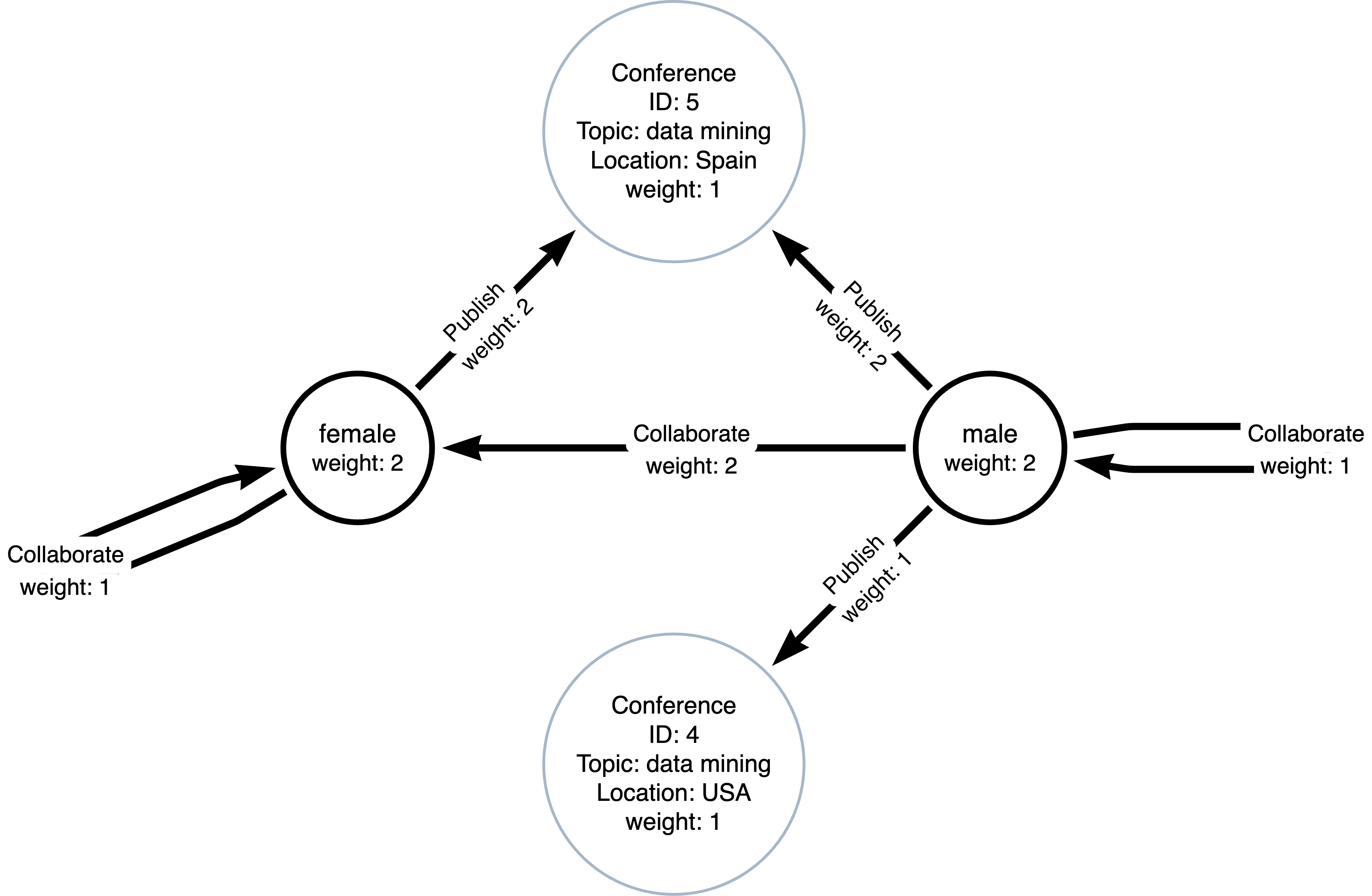}
\caption{Graph $G[[3], \{gender\}]$ aggregated by author label and gender property for the graph of Fig. \ref{fig:t3}.}
\label{fig:agg1}
\end{figure}

\begin{figure}
\centering
\includegraphics[scale=0.07]{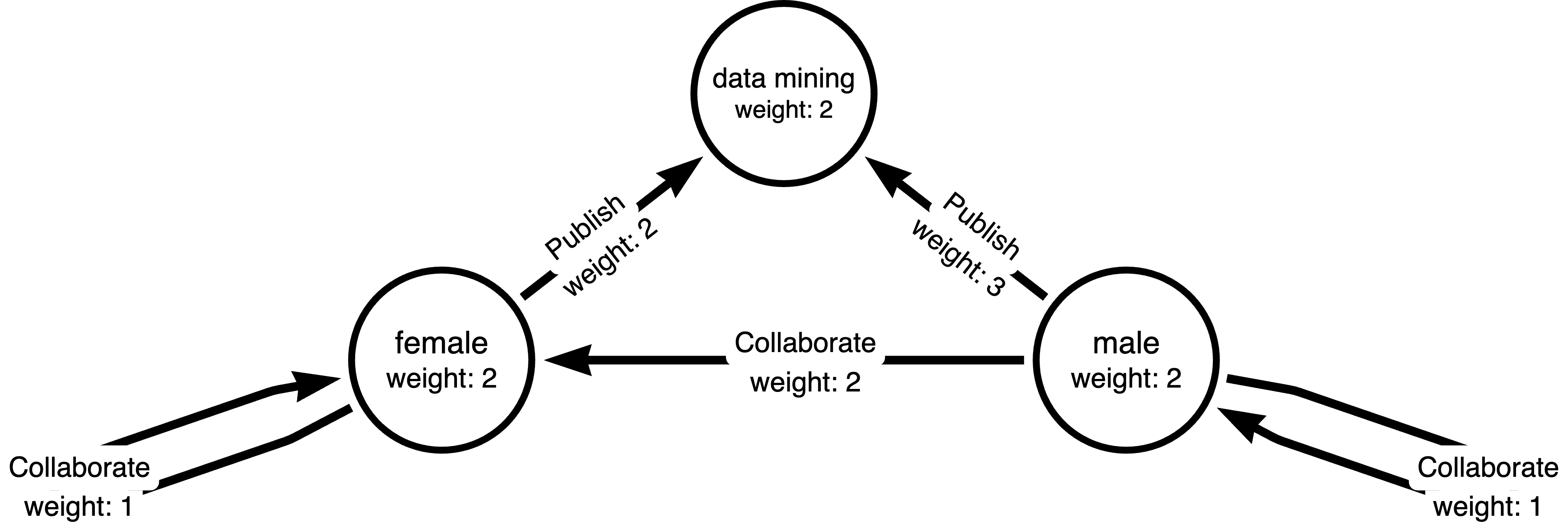}
\caption{Graph $G[[3], \{gender, topic\}]$ aggregated by author, conference labels and gender, topic properties for the graph of Fig. \ref{fig:t3}.}
\label{fig:agg2}
\end{figure}

\begin{definition}[Graph aggregation]
Given a temporal property graph $G[T]$, a set of labels $L \subseteq \mathcal{L}$ and a set of properties $P \subseteq \mathcal{P}$ associated with labels $L$, we define the graph aggregated by $P$ in $T$, $G[T,L:P]$,  as a property graph where:
\begin{itemize}
\item There exists a node $u'$ in $G[T, L:P]$ with label $l_{u'} =p$ for each distinct property value combination $p$ of the $P$ properties, and a single property, \textit{weight} with value equal to the number of distinct nodes in $G[T]$ that have the specific property value combination $p$. Also, there exists a node $v'$  in $G[T, L:P]$ with label $l_v \in \mathcal{L}$, for each node $v$ in $G[T]$ such that $l_v \notin L$, with the same properties and values as node $v$ and an additional property \textit{weight} with value equal to 1.
\item There exists an edge $e'$ with label $l_e \in \mathcal{L}$ between two nodes $u', v'$ in $G[T,L:P]$, if there is an edge of label $l_e$ between the corresponding nodes in the original graph, and a property, \textit{weight}, with value equal to the number of such edges.
\end{itemize}
\end{definition}
For simplicity, we omit labels in the aggregate graph notation using $G[T,P]$ instead of $G[T, L:P]$ to avoid notation overload.




Figure \ref{fig:agg1} depicts the aggregation for the author label, and the gender property for the graph of Fig. \ref{fig:t3}, denoted as $G[[3], \{gender\}]$. Nodes that represent authors are aggregated by gender, whereas the rest of the nodes (conferences) are not affected by the aggregation.  
In more detail, the property \textit{gender} has two values, namely \textit{male}, and \textit{female}, thus there are two aggregated nodes in Fig. \ref{fig:t3}, one with label \textit{male} corresponding to the nodes with ID 2 and 7 in the original graph, and one with label \textit{female} corresponding to nodes with ID 3 and 6. Both nodes have a single property \textit{weight} with value 2. The aggregate graph also includes two nodes that correspond to the nodes with label \textit{conference} in the original graph, i.e., nodes with ID 4 and 5, with an additional property \textit{weight} with value 1.
For nodes of the same property values in the original graph, edges with the same label are grouped together in the aggregated graph and have an additional property \textit{weight} whose value is equal to the number of edges in the original graph that they group.
For instance, there is an edge with label $collaborate$ between nodes labeled $female$ and $male$ corresponding to all collaborations between male and female authors in the original graph, i.e., edges between nodes with ID 2 and 3, and ID 2 and 6, and its property $weight$ has value 2. Similarly, there is an edge with label $publish$ and property \textit{weight} equal to 1, between the node with label $male$ and the $conference$ node with ID 4 corresponding to the edge of the original graph between nodes with ID 2 and 4.


As an additional example, Fig. \ref{fig:agg2} depicts the aggregation for the $author$ and $conference$ labels and their respective properties, $gender$ and $topic$, for the graph of Fig. \ref{fig:t3}, denoted as $G[[3], \{gender, topic\}]$. As all node labels are selected for aggregation, all the nodes of the original graph are aggregated. The aggregated nodes for authors are the same as in Fig. \ref{fig:agg1}. Additionally, conferences are aggregated based on topic. Specifically, there are two conferences of topic data mining in the original graph. Thus, the aggregate graph includes a node for the two conferences with label \textit{data mining} and $weight$ 2. Edges are formed as previously described. For instance, there are 3 publications between male authors and the data mining conferences, so the aggregated graph  contains one corresponding edge with $weight$ equal to 3.

Aggregation can be applied similarly on any event graph. For example, Fig. \ref{fig:events} shows the three event graphs for the evolution of the graph from $G[1 \sqcup 2]$ defined with loose semantics to $G[3]$ aggregated by gender, that is $G_{s}[(3, [1 \sqcup 2]), \{gender\}]$, $G_{g}[(3, [1 \sqcup 2]), \{gender\}]$ and $G_{h}[(3, [1 \sqcup 2]), \{gender\}]$.



To evaluate the significance of an event, we measure the affected graph elements. We focus on events considering edges.
Let $l_e \in \cal{L}$ be an edge label, and $p$ and $p'$ be two property value combinations for a set of properties $P$. 
For an event $\gamma$, we use $count(G_{\gamma}[(T_2, T_1)], l_e, p, p')$ to denote the count of edges with label $l_e$ from nodes with property value combination $p$ to nodes with  property value combination $p'$ in the $\gamma$-graph $G_{\gamma}[(T_2, T_1)]$. 

\begin{prop}
Given an aggregate event graph $G_{\gamma}[(T_2, T_1), P]$, if $p$ and $p'$ correspond to values of properties in $P$, then $count(G_{\gamma}[(T_2, T_1)], l_e, p, p')$ is equal to the weight property of the edge labeled $l_e$ between the nodes with labels $p$ and $p'$ in $G_{\gamma}[(T_2, T_1), P]$. 
\end{prop}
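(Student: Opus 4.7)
The plan is to prove the proposition by unfolding the two definitions involved and showing they coincide on the aggregated edges of interest. Since the event graph $G_{\gamma}[(T_2, T_1)]$ is itself produced by applying the set-based temporal operators (intersection or difference) on temporal property graphs, it remains a temporal property graph, so the aggregation operator $G_{\gamma}[(T_2, T_1), P]$ is well defined.

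Next, I would invoke the definition of graph aggregation directly. For a property value combination $p$ of the properties in $P$, aggregation produces exactly one node in $G_{\gamma}[(T_2, T_1), P]$ with label $p$, representing the set of all nodes of $G_{\gamma}[(T_2, T_1)]$ that share property values $p$. By the second clause of the aggregation definition, the edge with label $l_e$ between the aggregated nodes labeled $p$ and $p'$ carries a \textit{weight} property whose value is, by construction, the number of edges with label $l_e$ in $G_{\gamma}[(T_2, T_1)]$ that connect a node having property values $p$ to a node having property values $p'$.

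I would then invoke the definition of $count$: the quantity $count(G_{\gamma}[(T_2, T_1)], l_e, p, p')$ is, by definition, precisely the number of $l_e$-labeled edges in $G_{\gamma}[(T_2, T_1)]$ from nodes with property value combination $p$ to nodes with property value combination $p'$. Comparing the two descriptions yields equality, completing the argument.

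The only real obstacle is a notational one: it must be verified that the labels $p$ and $p'$ referenced in the statement are indeed the aggregated-node labels produced by the aggregation (which requires $p, p'$ to be combinations of values of properties in $P$, as explicitly assumed). Once this hypothesis is made precise, the proof is essentially a direct rewriting of definitions, so no induction or case analysis is needed, and the result holds uniformly for each of the three event types $\gamma \in \{s, g, h\}$ since the argument does not depend on which set operator was used to construct the event graph.
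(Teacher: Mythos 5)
Your argument is correct: the paper itself states this proposition without proof, treating it as an immediate consequence of the definitions, and your unfolding of the aggregation definition (weight of the $l_e$-edge between aggregate nodes $p$ and $p'$ equals the number of $l_e$-edges between the corresponding node groups) against the definition of $count$ is exactly the intended justification. No gap — the result is definitional, and your remarks on well-definedness of aggregating the event graph and on $p,p'$ being value combinations of $P$ are the right hypotheses to make explicit.
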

For example, given $G_{s}[(3, [1 \sqcup 2]), \{gender\}]$ in Fig. \ref{fig:events.1}, $count(G_s[3, (1 \sqcup 2))],$ $collaborate, male, female)$, which counts the collaborations between male and female authors that remain stable on $G[3]$ with respect to $G[1 \sqcup 2]$, is equal to the $weight$ property of the \textit{collaborate} edge between the $male$ and $female$ aggregated nodes, and thus, $count(G_s[3, (1 \sqcup 2))], collaborate, male, female)=1$.





\begin{figure*}
\centering
\begin{subfigure}[b]{1\textwidth}
\includegraphics[scale=0.07]{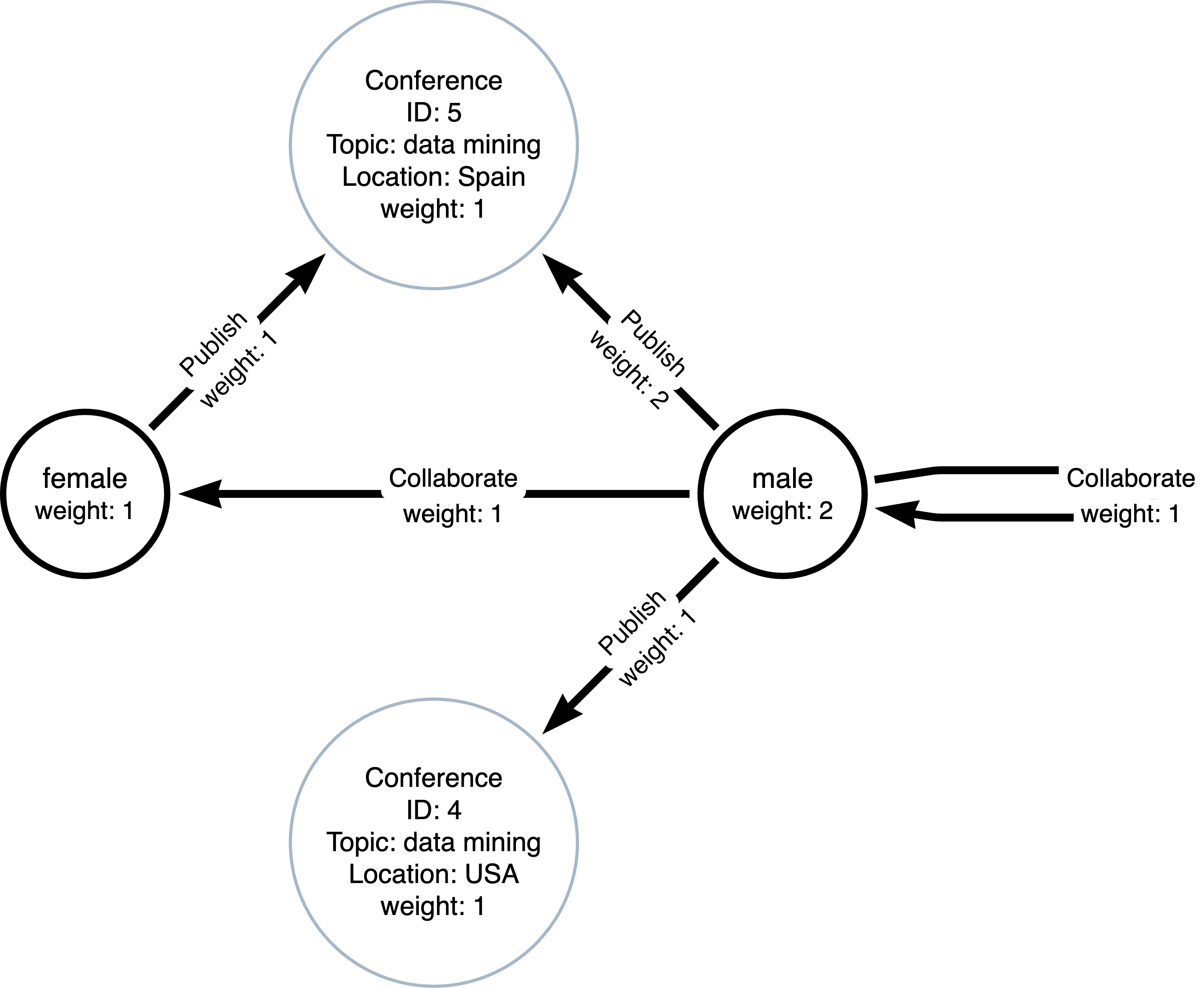}
\centering
\caption{}
\label{fig:events.1}
\end{subfigure}
\begin{subfigure}[b]{0.5\textwidth}
\includegraphics[scale=0.07]{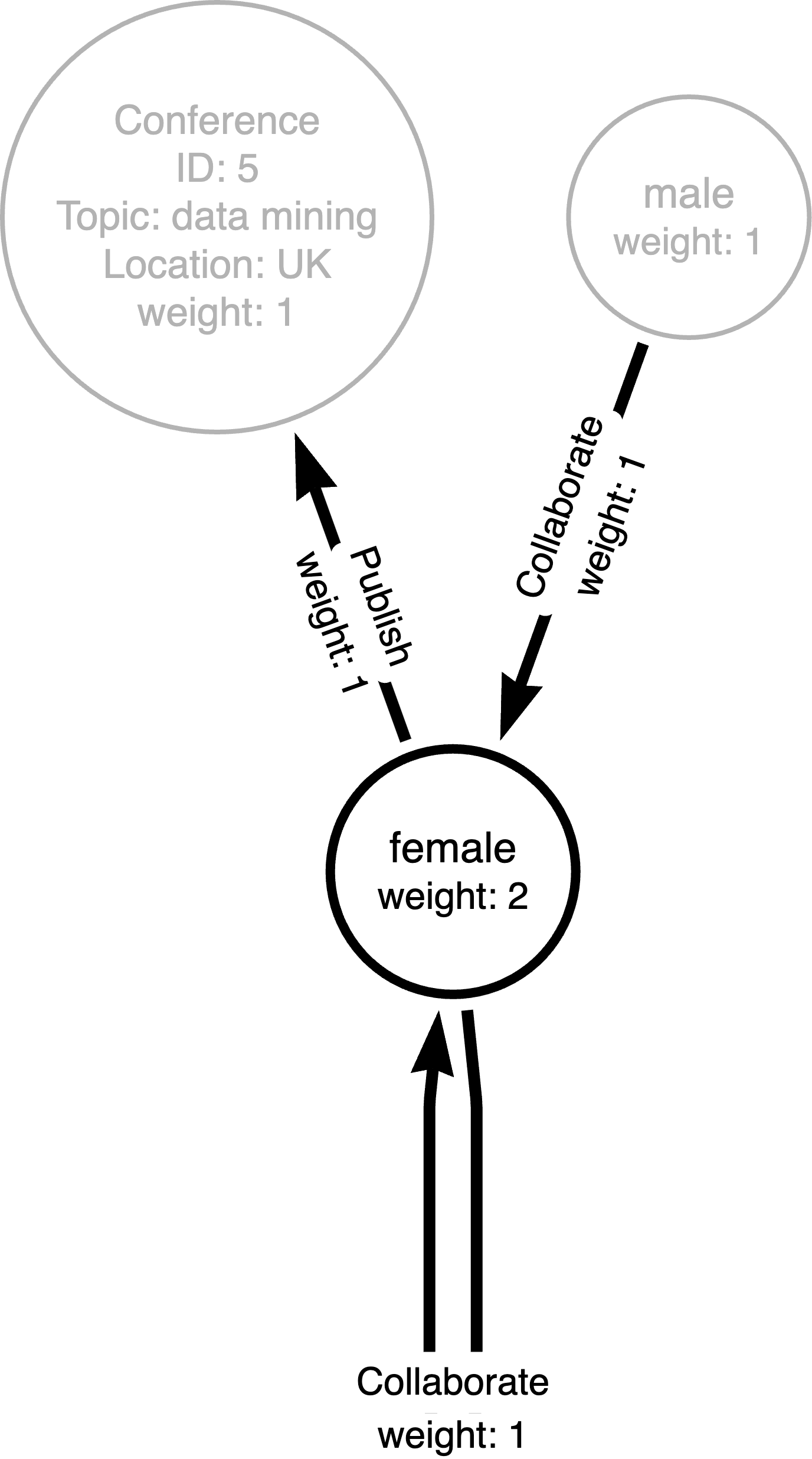}
\centering
\caption{}
\label{fig:events.2}
\end{subfigure}%
\begin{subfigure}[b]{0.5\textwidth}
\includegraphics[scale=0.07]{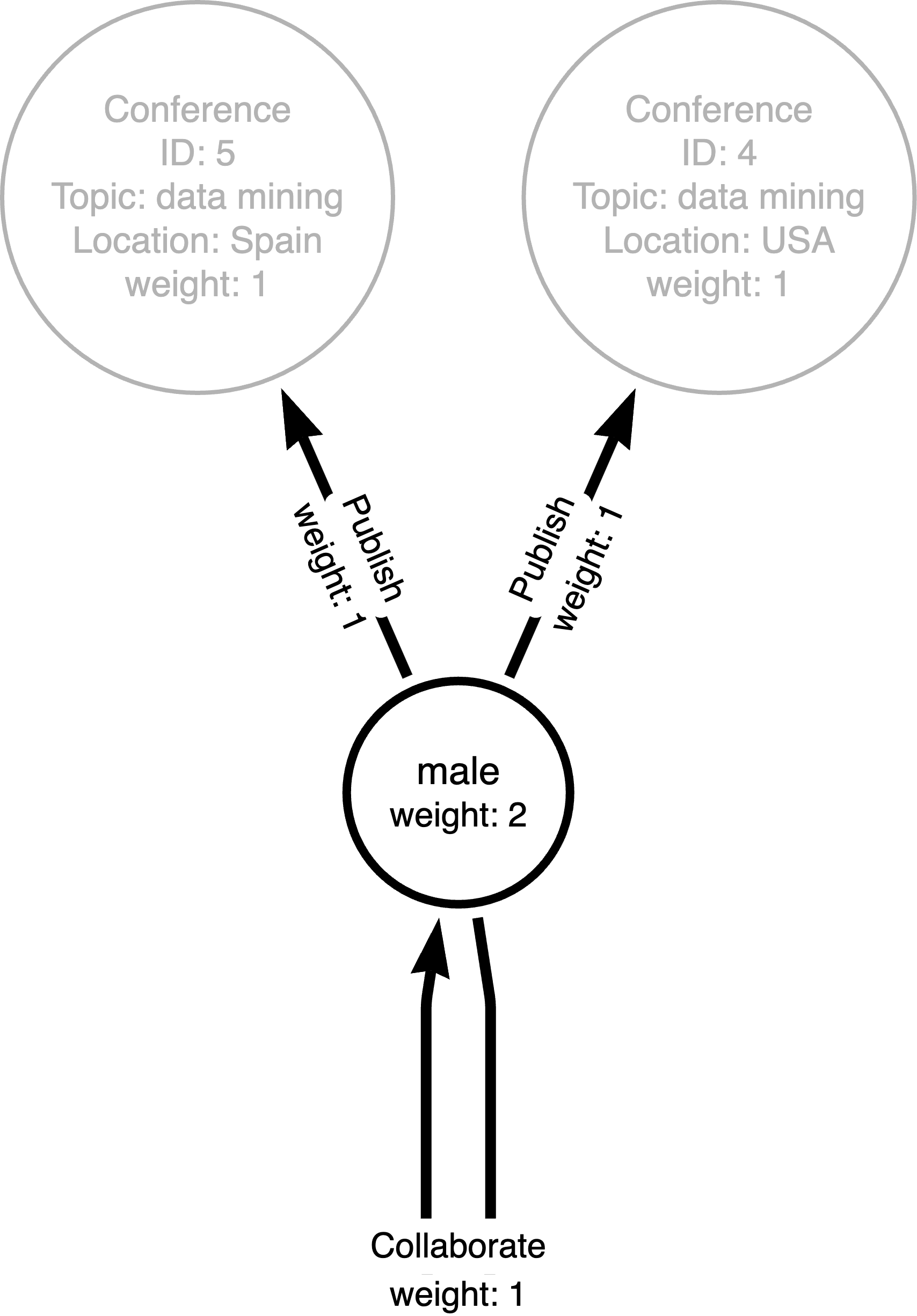}
\centering
\caption{}
\label{fig:events.3}
\end{subfigure}
\caption{(a) Stability ($G_s$), (b) growth ($G_g$) and (c) shrinkage ($G_h$) graphs aggregated on gender for the graph of Fig. \ref{fig:sem.1} to that of Fig. \ref{fig:t3}.}
\label{fig:events}
\end{figure*}

\subsection{Graph exploration}
We would like to identify points in the history of the graph where the count of events is high, indicating interesting points in the evolution of the graph. Specifically, we want to locate time points, termed \textit{reference points}, $t_r$,  where the stability, shrinkage or growth counts, $count(G_{\gamma}[(t_r, T_r)], l_e, p, p')$, are large with respect to an immediate preceding interval $T_r$. To locate such points, for each reference point $t_r$, we perform our search as follows. We initialize $T_r$ to be the time point immediately preceding $t_r$, and then we gradually extend $T_r$ to the past, using either the intersection-inspired \textit{strict} or the union-inspired \textit{loose} semantics.


Given $G_{\gamma}[(t_r, T_r)]$, edge label $l_e$, and property value combinations $p$ and $p'$, we say that $T_r$ is: (i) minimal if $\nexists$ $T'_{r}$ such that $T'_r\subset T_r$ and $count(G_{\gamma}[(t_r, T'_r)],$ $l_e, p, p') \geq count(G_{\gamma}[(t_r, T_r)], l_e, p, p')$, and (ii) maximal if $\nexists$ $T'_{r}$ such that $T_r\subset T'_r$ and  $count(G_{\gamma}[(t_r, T'_r)], l_e, p, p')$ $ \geq count(G_{\gamma}[(t_r, T_r)], l_e, p, p')$.

We say that a count for an event $\gamma$ with respect to strict (loose) semantics, is \textit{increasing} if: for any $T'_r  \supseteq T_r$ defined using strict (resp. loose) semantics, $count(G_{\gamma}[(t_r, T'_r)], l_e, p, p')$ $\geq$ $count(G_{\gamma}[(t_r, T_r)], l_e, p, p')$, for any edge label $l_e$, any combination of property values $p$ and $p'$, and any reference point $t_r$. Similarly, we say that a count for an event $\gamma$ with respect to strict (loose) semantics, is \textit{decreasing} if: for any $T'_r  \subseteq T_r$ defined using strict (resp. loose) semantics, $count(G_{\gamma}[(t_r, T'_r)], l_e, p, p')$ $\leq$ $count(G_{\gamma}[(t_r, T_r)], l_e, p, p')$, for any edge label $l_e$, any combination of property values $p$ and $p'$, and any reference point $t_r$. It holds that count for:
\begin{itemize}
    \item \textit{stability} is \textit{decreasing} wrt strict and \textit{increasing} wrt loose semantics,
    \item \textit{growth} is \textit{increasing} wrt strict and \textit{decreasing} wrt loose semantics, and
    \item \textit{shrinkage} is \textit{decreasing} wrt strict and \textit{increasing} wrt loose semantics.
\end{itemize}

Consequently for increasing counts, we are interested in minimal intervals, while for decreasing counts in maximal intervals.



\section{Evolution skyline}

In the following, we present skyline-based exploration where based on a set of criteria we filter out the exploration results that appear to be of low interestingness.
A skyline approach retrieves all items that have the ``best'' preferred value, in at least one of their features and are not worse in the others. In particular, for multidimensional data, we say that an item dominates another item if it is as good in all dimensions, and better in at least one dimension.  A skyline query returns all non-dominated items in a dataset.

We define two criteria to evaluate the interestingness of an event $\gamma$ in an event graph $G_{\gamma}[(t_r, T_r)]$ with respect to edge label $l_e$ and property value combinations $p$ and $p'$ of properties $P$: the length $len_{T_r}$ of the temporal element $T_r$ defined as the number of time instances $t \in T_r$, and the event count $count(G_{\gamma}[(t_r, T_r)], l_e, p, p')$.

\begin{definition}[Individual evolution skyline] Given an event $\gamma$, a graph $G[T]$, edge label $l_e \in \cal{L}$ and sets of property values, $p$ and $p'$ of some properties $P \subseteq \cal{P}$, the individual evolution skyline is defined as all non-dominated tuples $(t_r, T_r, w)$ where  $t_r$ $\in$ $T$, $T_r$ $\subset$  $T$ and $w$ = $count(G_{\gamma}[(t_r, T_r)],l_e, p, p')$. 
\end{definition}

For an event $\gamma$ with semantics that have increasing counts, we say that a tuple $(t_r, T_r, w)$ dominates a tuple $(t'_{r}, T'_{r}, w')$ if 
$len_{T_r} < len_{T'_{r}}$ and $w \geq w'$, or $w > w'$ and $len_{T_r} \leq len_{T'_{r}}$.
Respectively, for an event $\gamma$ with semantics that have decreasing counts, we say that a tuple $(t_r, T_r, w)$ dominates a tuple $(t'_{r}, T'_{r}, w')$ if 
$len_{T_r} > len_{T'_{r}}$ and $w \geq w'$, or $w > w'$ and $len_{T_r} \geq len_{T'_{r}}$.

For example, considering stability with strict semantics for collaborations between male authors, i.e., edges with $l_e$:\textit{collaborate} and $p,p'$:\textit{male, male}, a tuple $(11, [1\sqcup10], 20)$ in the skyline means that: (a) there is no longer interval where 20 or more such edges remained stable, and (b) there is no interval with length 10 or more where more edges remained stable.

Taking advantage of the increasing and decreasing properties, we restrict the number of such tuples as follows.
\begin{lemma}
Any non-dominated tuple includes an interval $T$ which is either maximal or minimal depending on the use of strict or loose semantics.
\label{lem1}
\end{lemma}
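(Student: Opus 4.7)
The plan is to prove the lemma by contradiction, splitting into two cases based on whether the count is increasing or decreasing in its semantics (as classified in the bullet list preceding the definition of individual evolution skyline). The key insight is that the definition of dominance is set up precisely so that the skyline's optimization direction in the temporal dimension (short versus long $T_r$) lines up with the monotonicity direction of the count.

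First I would treat the increasing case. Suppose $\gamma$ with the chosen semantics yields increasing counts (e.g., stability with loose, growth with strict, or shrinkage with loose), and let $(t_r, T_r, w)$ be a non-dominated tuple. Assume for contradiction that $T_r$ is not minimal with respect to the definition given at the end of Section 2.3. Then there exists $T'_r \subset T_r$ with $count(G_\gamma[(t_r, T'_r)], l_e, p, p') \geq w$. Setting $w' = count(G_\gamma[(t_r, T'_r)], l_e, p, p')$, the tuple $(t_r, T'_r, w')$ has $len_{T'_r} < len_{T_r}$ and $w' \geq w$, which matches the first clause of the dominance definition for increasing counts. This contradicts the non-dominance of $(t_r, T_r, w)$, so $T_r$ must be minimal.

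Next I would handle the decreasing case symmetrically. Suppose the semantics yield decreasing counts (stability with strict, growth with loose, or shrinkage with strict), and let $(t_r, T_r, w)$ be non-dominated. If $T_r$ is not maximal, there exists $T'_r \supset T_r$ with $count(G_\gamma[(t_r, T'_r)], l_e, p, p') \geq w$; the corresponding tuple $(t_r, T'_r, w')$ then satisfies $len_{T'_r} > len_{T_r}$ and $w' \geq w$, matching the first clause of the dominance definition for decreasing counts. This contradicts non-dominance, so $T_r$ is maximal.

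The proof is essentially a direct unfolding of definitions, so there is no real technical obstacle. The main care required is bookkeeping: making sure that the chosen semantics (strict/loose) paired with the event ($s$, $g$, $h$) correctly select increasing versus decreasing behavior, and that the correct clause of the dominance definition is invoked in each case. One small point worth stating explicitly is that $T'_r$ being a strict subset (resp. superset) of $T_r$ implies $len_{T'_r} < len_{T_r}$ (resp. $>$), which is immediate since $len$ counts time instances, so the strictness of the interval inclusion transfers to strict inequality of lengths.
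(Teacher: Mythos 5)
Your proof is correct and follows essentially the same route as the paper: assume the interval is not minimal (resp.\ maximal), invoke the definition of minimality (resp.\ maximality) to produce a strictly shorter (resp.\ longer) interval with count at least as large, and observe that the resulting tuple dominates the original via the first clause of the dominance definition. The only difference is that you spell out both the increasing and decreasing cases explicitly, whereas the paper treats one case and appeals to symmetry; your version is, if anything, slightly more complete.
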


\begin{proof} Without loss of generality, let us consider strict semantics. If the interval $T_r$ of a non-dominated tuple $(t_r, T_r, w)$ is not maximal, then there would exist a pair $(t_r, T'_r, w')$ such that $T_r \subset T'_r$ and for lengths of $T_r, T'_r$, would be $len_{T'_r} > len_{T_r}$ in which $w'\geq w$ events occur. Thus, $(t_r, T'_r, w')$ dominates $(t_r, T_r, w)$.
\end{proof}

When studying evolution, instead of focusing on the events concerning edges of label $l_e$ with particular property values, $p, p'$ as explored by individual skylines, we may be interested in events concerning all value combinations of a particular property set $P\subseteq \mathcal{P}$. 
\begin{definition}[Unified evolution skyline] Given an event $\gamma$, an edge label $l_e \in L$, a set of node properties $P$ and a graph $G[T]$, the unified evolution skyline is defined as all non-dominated tuples $(t_r, T_r, w_1, w_2, \dots, w_n)$, where $t_r \in T$, $T_r$ $\subset$  $T$, and $w_i$ = $count(G_{\gamma}[(t_r, T_r)], l_e, p, p')$ for each of the $n$ property value combinations $p, p'$ of $P$ with label $l_e$.
\end{definition}
Considering all value combinations by exploration with unified skylines allows us to detect points in the evolution of the graph that show interesting events with any property value of some property set $P$. For instance, considering property $gender$, instead of focusing on stable collaborations between, e.g, only male authors (\textit{male, male}), we also consider at the same time ($male, female$), ($female, male$) and ($female, female$) collaborations at the same time. Extending our previous example, for stability with strict semantics and author collaborations ($l_e:collaborate$), a tuple $(11, [1\sqcup10],30,20,15,10)$ in the unified skyline means that: (a) there is no longer interval where the same number or more edges between male authors, or male and female authors, or female and male authors or female authors remained stable, and (b) there is no interval with length 10 or more where more edges between  male authors, or male and female authors, or female and male authors or female authors remained stable. Note that the direction of the edge indicates authorship order.

Similarly to individual skylines, for an event $\gamma$ with semantics that have increasing counts, we say that a tuple $(t_r, T_r, w_1, w_2, \dots, w_n)$ dominates a tuple $(t'_r, T'_r, w'_1, w'_2, \dots, w'_n)$ if it holds that $len_{T_r} < len_{T'_r}$ and $w_i \geq w'_i$ $\forall$ $i$, or $len_{T_r} \leq len_{T'_r}$ and $w_i \geq w'_i$ $\forall$ $i$ and $\exists$ $w_i > w'_i$, where $i \in \{1, \dots, n\}$. Respectively, for an event $\gamma$ with semantics that have decreasing counts, we say that a tuple $(t_r, T_r, w_1, w_2, \dots, w_n)$ dominates a tuple $(t'_r, T'_r, w'_1, w'_2, \dots, w'_n)$ if it holds that $len_{T_r} > len_{T'_r}$ and $w_i \geq w'_i$ $\forall$ $i$, or $len_{T_r} \geq len_{T'_r}$ and $w_i \geq w'_i$ $\forall$ $i$ and $\exists$ $w_i > w'_i$, where $i \in \{1, \dots, n\}$.

From Lemma \ref{lem1} we derive that the intervals included in the non-dominated tuples of the unified evolution skyline are also maximal or minimal.

\begin{lemma}
In the unified skyline, any non-dominated tuple includes an interval $T$ which is either maximal or minimal depending on the use of strict or loose semantics.
\end{lemma}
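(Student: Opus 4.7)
The plan is to mirror the proof of Lemma~\ref{lem1}, extending it from a single count $w$ to the vector $(w_1, \dots, w_n)$ of counts over all property value combinations. The key observation is that the monotonicity properties stated in Section~2.3 (stability decreasing under strict semantics, growth increasing under strict semantics, etc.) are pointwise: they hold for every fixed edge label $l_e$ and every fixed property value combination $p, p'$. Hence, when we enlarge or shrink the interval $T_r$, \emph{all} coordinates $w_i$ move monotonically in the \emph{same} direction simultaneously.

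I would argue by contradiction, splitting on whether the semantics give increasing or decreasing counts. Consider first the decreasing case (for instance, strict semantics with stability or shrinkage), where the non-dominated tuple's interval is claimed to be maximal. Suppose for contradiction that $(t_r, T_r, w_1,\dots,w_n)$ lies in the unified skyline but $T_r$ is not maximal. Then there exists $T'_r \supsetneq T_r$ with $len_{T'_r} > len_{T_r}$, and by the decreasing property applied to every property value combination, the corresponding counts satisfy $w'_i \geq w_i$ for all $i \in \{1,\dots,n\}$. By the dominance definition for decreasing-count semantics ($len_{T'_r} > len_{T_r}$ together with $w'_i \geq w_i$ for all $i$), the tuple $(t_r, T'_r, w'_1,\dots, w'_n)$ dominates $(t_r, T_r, w_1,\dots,w_n)$, contradicting its presence in the skyline.

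The symmetric case (increasing counts, e.g.\ growth under strict or stability under loose) is handled identically, except we now assume $T_r$ is not minimal, produce $T'_r \subsetneq T_r$ with $len_{T'_r} < len_{T_r}$, and use the increasing property to conclude $w'_i \geq w_i$ for all $i$, again yielding domination.

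I do not anticipate a serious obstacle: the monotonicity lifted to vectors of counts does all the work, and the structural argument is essentially the one already written for Lemma~\ref{lem1}. The only point requiring mild care is bookkeeping across the four (event, semantics) combinations to make sure the directions of the interval extension and of the count inequality match the appropriate clause of the unified dominance definition; this is a case check rather than a real difficulty.
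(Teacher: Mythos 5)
Your overall structure (argue by contradiction, exhibit a witness interval, apply the unified dominance rule) matches the paper's, but the step where you obtain the count inequality is wrong, and it is precisely the step that carries the proof. In the decreasing case (e.g., stability under strict semantics), extending the interval can only \emph{lower} each count: $T_r \subset T'_r$ gives $w'_i \leq w_i$ for every combination $i$, not $w'_i \geq w_i$ as you claim. (The paper's displayed definition of ``decreasing'' contains a typo that, read literally, coincides with ``increasing''; the intended meaning, which is the only one consistent with the rest of Section~2.3 and with the semantics of $G_s(\cap)$, is that counts shrink as the interval grows.) If your reading were correct and every extension raised all counts, then every tuple with an extendable interval would be dominated and the skyline would collapse to tuples over the single longest interval per reference point --- contradicting, e.g., the skyline examples and sizes reported in the paper. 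The same reversal occurs in your increasing case: shrinking the interval gives $w'_i \leq w_i$ by monotonicity, so monotonicity again cannot supply the $w'_i \geq w_i$ you need.

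The inequality has to come from the negation of maximality (resp.\ minimality), not from monotonicity --- exactly as in the paper's proof and in Lemma~\ref{lem1}. Read in the vectorized sense appropriate to the unified skyline, ``$T_r$ not maximal'' means there exists $T'_r \supset T_r$ (hence $len_{T'_r} > len_{T_r}$) whose counts satisfy $w'_i \geq w_i$ for all $i$; that tuple then dominates $(t_r, T_r, w_1,\dots,w_n)$ under the decreasing-count dominance rule, giving the contradiction. Monotonicity plays only a background role (it forces those counts to in fact be equal, and explains why maximal/minimal intervals are the right notion per semantics); your ``key observation'' that the coordinates move together is true but does not do the work you assign to it. To repair the proof, replace ``by the decreasing (resp.\ increasing) property, $w'_i \geq w_i$'' with ``by the definition of non-maximality (resp.\ non-minimality), $w'_i \geq w_i$ for all $i$,'' and the rest of your argument goes through.
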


\begin{proof} Without loss of generality, let us consider a decreasing event. If the interval $T_r$ of a non-dominated tuple with $n$ dimensions $(t_r, T_r, w_1, \dots, w_n)$ is not maximal, then there would exist a tuple $(t_r, T'_r, w'_1, \dots, w'_n)$ such that $T_r \subset T'_r$ and for lengths of $T_r, T'_r$, would be $len_{T'_r} > len_{T_r}$ in which $w'_i\geq w_i, \forall$ $i \in n$. Thus, $(t_r, T'_r, w'_1, \dots, w'_n)$ dominates $(t_r, T_r, w_1, \dots, w_n)$.
\end{proof}

The following lemmas relate individual with unified skylines.

\begin{lemma}
\label{lem3}
A tuple $r=(t_r, T_r, w_1, w_2, \dots, w_n)$ belongs to the unified skyline set $S$, if and only if  $\exists$ $w_i$, $ 1 \leq i \leq n$, for which it holds $(t_r, T_r, w_i) \in s_i$, where $s_i$ is the individual skyline set for the $i$-th combination.
\end{lemma}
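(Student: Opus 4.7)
The plan is to prove the biconditional by contrapositive in both directions. Since the arguments for increasing and decreasing counts are symmetric, I would focus on a decreasing event (for instance, stability under strict semantics or shrinkage under strict semantics) and simply flip the length inequalities to handle the increasing case.

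For the backward direction $(\Leftarrow)$, I would assume $(t_r, T_r, w_j) \in s_j$ for some coordinate $j$ and suppose toward contradiction that $r \notin S$. Then some unified tuple $r' = (t'_r, T'_r, w'_1, \dots, w'_n)$ dominates $r$, so by the unified-dominance definition $w'_i \geq w_i$ for every $i$ together with a strict inequality appearing somewhere, namely either $len_{T'_r} > len_{T_r}$ or $w'_k > w_k$ for some coordinate $k$. I then project onto dimension $j$: in the length-strict case the pair $(t'_r, T'_r, w'_j)$ satisfies the first clause of individual decreasing dominance, and in the weight-strict case with $k=j$ it satisfies the second clause. Either way $(t_r, T_r, w_j)$ is individually dominated in $s_j$, contradicting its assumed membership.

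For the forward direction $(\Rightarrow)$, I would assume $r \in S$ and suppose, toward contradiction, that no projection $(t_r, T_r, w_i)$ lies in the corresponding $s_i$. For each $i$ there is then a unified tuple $R^{(i)}$ whose $i$-th projection individually dominates $(t_r, T_r, w_i)$. The plan is to distill these $n$ per-coordinate dominators into a single unified dominator of $r$: the natural candidate is the $R^{(i)}$ attaining the extremal interval length (longest for a decreasing event, shortest for an increasing one), from which the desired coordinatewise weight inequalities against $r$ can be derived by invoking the monotonicity of the count function with respect to interval extension/contraction, as catalogued in the previous subsection, together with the maximality or minimality of the intervals involved granted by Lemma \ref{lem1}.

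The main obstacle lies exactly in this combination step of the forward direction: individual dominance in each coordinate does not automatically assemble into a single unified dominator, because the per-coordinate dominators $R^{(i)}$ may come from different reference pairs and need not share the same $(t_r, T_r)$. Careful use of the monotonicity of event counts is required to guarantee that one chosen dominator beats $r$ in every coordinate simultaneously while preserving the strict inequality that unified dominance demands; the backward direction, by contrast, reduces to a one-line projection argument once the dominance definitions are unpacked.
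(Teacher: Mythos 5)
Your backward direction is essentially the paper's: take a unified dominator $r'$ of $r$ and project it onto the coordinate $j$ for which $(t_r,T_r,w_j)\in s_j$ to obtain an individual domination, hence a contradiction. One case you wave past, and which the paper does treat explicitly, is when $r'$ ties $r$ on length and on $w_j$ and the strict inequality required by unified dominance occurs only at some $k\neq j$; the projection onto dimension $j$ then yields no individual domination at all, so the contradiction does not come for free and needs a separate argument (the paper closes it by forcing $len_{T_r}=len_{T_{r'}}$ and $w_j=w'_j$ from the membership $(t_r,T_r,w_j)\in s_j$).

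The genuine gap is in your forward direction, and it sits exactly where you flag it. Note first that the paper does not assemble per-coordinate dominators at all: it argues directly from the negation of unified dominance that some projection must be non-dominated, so your route diverges from the paper's precisely at the step you cannot complete. Your proposed fix --- pick the per-coordinate dominator $R^{(i)}$ of extremal length and use monotonicity of the counts to show it beats $r$ in every coordinate --- does not work, because the monotonicity of $count$ under interval extension or contraction only relates counts for the \emph{same} reference point $t_r$ across nested intervals $T_r\subset T'_r$; the dominators $R^{(i)}$ may live at entirely different reference points, where the count profiles are unconstrained by monotonicity. Concretely, for a decreasing event consider three candidate tuples at three distinct reference points with $(\text{length},w_1,w_2)$ equal to $(3,5,5)$, $(5,6,4)$ and $(5,4,6)$: each projection of the first tuple is individually dominated (by the second tuple in dimension $1$ and by the third in dimension $2$), yet neither of the other tuples dominates the first in the unified sense, and nothing you invoke (Lemma \ref{lem1}, maximality, monotonicity) rules this configuration out. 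So the combination step is not merely delicate --- it fails as stated, and the forward direction needs a genuinely different argument from the one you sketch.
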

\begin{proof} 
Without loss of generality, let us assume the skyline $S$ for an event $\gamma$ with decreasing counts.

Let a tuple $r = (t_r, T_r, w_1, w_2, \dots, w_n)$ $\in S$ such that $\nexists$ $w_i$ for which $(t_r, T_r, w_i)$ $\in s_i$, $ 1 \leq i \leq n$. Since $r\in S$, then $r$ in not dominated by any other tuple. This implies that $\nexists$ $r' = (t_{r'}, T_{r'}, w_1', w_2', \dots, w_n')$ such that either (i) $len_{T_r}<len_{T_{r'}}$ and $w_i \leq w_i', \forall i,  1 \leq i \leq n$, or (ii) $len_{T_r} \leq len_{T_{r'}}$ and $w_i \leq w_i', \forall i,  1 \leq i \leq n$ and $\exists k, 1 \leq k \leq n$, such that $w_k'>w_k$.
If (i) holds, then $(t_r,T_r,w_i)$ is not dominated for all $i,  1 \leq i \leq n$, thus, $(t_r,T_r,w_i) \in s_i$ for all dimensions $i$. Otherwise, if (ii) holds, then  $(t_r,T_r,w_k)$ is not dominated and thus,  $(t_r,T_r,w_k) \in s_k$.

Let $(t_r, T_r, w_k) \in s_k$ and $\nexists r=(t_r, T_r, w_1, \dots, w_k, \dots, w_n) \in S$. Since  $(t_r, T_r, w_k) \in s_k$, it means that $\nexists (t_r'', T_r'', w_k'')$ such that  $len_{T_r''}>len_{T_r}$ and $w_k''\geq w_k$, or $w_k''>w_k$ and $len_{T_r} \leq len{T_r''}$ (1). Also, since  $r \notin S$,  $\exists$ $r' = (t_{r'}, T_{r'}, w_1', w_2', \dots, w_k', \dots, w_n')$ that dominates $r$.  This means that either  (i) $len_{T_r}<len_{T_{r'}}$ and $w_i \leq w_i', \forall i,  1 \leq i \leq n$, or (ii) $len_{T_r} \leq len_{T_{r'}}$ and $w_i \leq w_i', \forall i,  1 \leq i \leq n$ and $\exists y, 1 \leq y \leq n$, such that $w_y'>w_y$. If (i) holds, then $(t_r', T_r', w_k')$ dominates $(t_r, T_r, w_k)$ which is false since $(t_r, T_r, w_k) \in s_k$. Otherwise, if (ii) holds and based on (1) it holds that $len_{T_r}=len{T_r'}$ and $w_k=w_k'$, thus, $r=r'$.

\end{proof}

\begin{lemma}
\label{lem4}
The size of the unified skyline $S$ is at most equal to the sum of the sizes and at least equal to the size of each of the individual skyline sets $(s_1, s_2, \dots, s_n)$ for $n$ property values combination.
\end{lemma}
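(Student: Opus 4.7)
The plan is to derive both bounds as direct corollaries of Lemma \ref{lem3}, which characterizes membership in $S$ via membership in some individual skyline $s_i$. The underlying observation that makes the counting arguments clean is that once the pair $(t_r, T_r)$ is fixed, the weights $w_1, \dots, w_n$ are determined (they are values of the function $count$ on $G_{\gamma}[(t_r, T_r)]$). Hence each tuple in $S$, and each tuple in an individual skyline $s_i$, is uniquely identified by its $(t_r, T_r)$ component.

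For the upper bound $|S| \leq \sum_{i=1}^{n} |s_i|$, I would construct an injection $\phi: S \to \bigsqcup_{i=1}^{n} s_i$ into the disjoint union. For each $r = (t_r, T_r, w_1, \dots, w_n) \in S$, Lemma \ref{lem3} guarantees at least one index $i$ with $(t_r, T_r, w_i) \in s_i$; pick any such $i$ (e.g.\ the smallest) and set $\phi(r) = (t_r, T_r, w_i)$ viewed as an element of $s_i$. Injectivity: if $\phi(r_1) = \phi(r_2)$, the two pre-images use the same index $i$ and agree on $(t_r, T_r, w_i)$; but $(t_r, T_r)$ alone pins down all other coordinates, so $r_1 = r_2$. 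This yields $|S| \leq |\bigsqcup_i s_i| = \sum_i |s_i|$.

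For the lower bound $|S| \geq |s_i|$ for every $i$, I would go in the reverse direction: fix $i$ and define $\psi_i: s_i \to S$ by sending $(t_r, T_r, w_i) \in s_i$ to the full tuple $(t_r, T_r, w_1, \dots, w_n)$ whose remaining entries are the counts on $G_\gamma[(t_r,T_r)]$ for the other property-value combinations. This image lies in $S$ by the ``if'' direction of Lemma \ref{lem3}. Injectivity is again immediate because distinct elements of $s_i$ have distinct $(t_r, T_r)$ (otherwise they would coincide entirely), and $(t_r, T_r)$ is preserved by $\psi_i$.

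I do not expect a real obstacle: both directions are bookkeeping once Lemma \ref{lem3} is in hand. The only subtlety is to make explicit that a skyline is a set of tuples indexed by $(t_r, T_r)$ and to be careful that the upper-bound map lands in the disjoint union rather than the ordinary union (otherwise one would only get $|S| \leq |\bigcup_i s_i|$, which is a weaker statement than what is claimed, though still true). I will state this normalization once at the start and then conclude both inequalities in a short paragraph each.
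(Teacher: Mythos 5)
Your proposal is correct and follows essentially the same route as the paper: both bounds are derived as direct consequences of Lemma \ref{lem3}, the upper bound from its ``only if'' direction and the lower bound from its ``if'' direction. The only difference is presentational --- the paper argues by contradiction and leaves the counting step implicit, whereas you make explicit the injections and the key fact that $(t_r, T_r)$ determines all the weights, which is a welcome tightening of the same argument.
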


\begin{proof}
Let us assume that for the size $|S|$ of $S$ it holds that $|S| > (|s_1| + |s_2| + \dots + |s_n|)$. This implies that $\exists$ result $(t_r, T_r, w_1, w_2, \dots, w_n) \in S$ for which $(t_r, T_r, w_i) \notin s_i$, $\forall$ $w_i \in$ $(w_1, w_2, \dots, w_n)$, which is false according to Lemma \ref{lem3} as $\forall$ $r \in S$ $\exists$ $w_i$ for which holds $(t_r, T_r, w_i) \in s_i$. Also let us assume that $\exists$ $s_i$ $\in (s_1, s_2, \dots, s_n)$ for which $|S| < |s_i|$. This implies $\exists$ $(t_r, T_r, w_i) \in s_i$ which is not included in any of the results $r$ $\in S$. The assumption is false according to Lemma \ref{lem3} as every result in the skyline set is necessarily included in the results of the unified skyline set.
\end{proof}

As skylines may contain many results, especially when considering graphs evolving over long periods of time and multiple property value combinations, we extend our definition of the unified evolution skyline to the \textit{top}-$k$ \textit{unified evolution skyline}. To this end, we define the \textit{domination degree}.
\begin{definition}[Domination Degree]
Given an event $\gamma$ and edge label $l_e$, the domination degree of a tuple $r=(t_r, T_r, w_1, \dots, w_n)$, $dod(r)$, is defined as the number of tuples $r$ dominates. 
\end{definition}
To determine the top-$k$ skylines, we rank all $(t_r, T_r, w_1, \dots, w_n)$ tuples in the unified evolution skyline according to their domination degree and retrieve only the top-$k$ ranking ones.

\section{Skyline-based exploration}

We now describe the evolution skyline computation process, that given an event $\gamma$, a temporal property graph $G[T]$, an edge label $l_e$ and a set of properties $P \subseteq \mathcal{P}$, computes all non-dominated tuples $(t_r, T_r, W)$, $t_r \in T, T_r \subset T$ and $W$ is a hash table containing the counts $count(G_{\gamma}[(t_r, T_r)],$ $l_e, p, p')$ for each $p, p'$ of $P$.


A naive approach to compute the skyline would entail two phases. First, computing all possible $(t_r, T_r, W)$, i.e., for each $t_r \in T$ enumerate all $T_r \subset T$ and computing the corresponding $W$, and then comparing all these tuples so as to eliminate all the dominated ones. However, this exhaustive approach can be easily improved by pruning all dominated results as they are derived. We propose incrementally evaluating the skyline for each reference point $t_r$, while at the same time updating the final skyline as new results are derived.

\begin{algorithm}
	\KwIn{A temporal property graph $G[T]$, an edge label $l_e$, a set of properties $P$}
	\KwOut{Skyline set $S_{sky}$, dominance set $S_{dom}$}
    Initialize hash tables $S_{sky}: len_{T_r} \rightarrow{\{(t_r, T_r, W)\}}$, $S_{dom}: (t_r, T_r, W) \rightarrow{|d|}$, $d:$ set of dominated items\\
    
    \For{each reference point $t_r$}{
        $T_r \leftarrow $ longest possible interval for $t_r$\\
        $len_{T_r}, len_{T_{r_{max}}} \leftarrow length(T_r)$\\
        \While{$len_{T_r} \geq 1$}{
            Initialize hash table $W_{curr}$\\
            Compute $Intersection$ $G_s[(t_r, T_r)] = G[t_r] \cap G[T_r]$\\
            Compute $Aggregation$ $G_s[(t_r, T_r), P]$\\
            \For{each property value combination $p, p'$ of $P$}{
                $W_{curr}[(p,p')] = count(G_s[(t_r, T_r)], l_e, p, p')$
            }
            $W_{curr}[length] = len_{T_r}$\\
            $curr \leftarrow (t_r, T_r, W_{curr})$\\
            $S_{dom}[curr] \leftarrow 0$\\
            $len_{prev} \leftarrow len_{T_r}$\\            
            \While{$\nexists$ $S_{sky}[len_{prev}]$ and $len_{prev} \leq len_{T_{r_{max}}}$}{
                $len_{prev}$ $+= 1$
            }
            \If{$len_{prev} > len_{T_{r_{max}}}$}{
                $S_{sky}[len_{T_r}] \leftarrow curr$\\
                $S_{dom}[curr]$ $+= 1$\\
            }
            \Else{
                \For{each tuple $prev$ $\in$ $S_{sky}[len_{prev}]$}{
                    \If{for all $(p, p')$ $W_{curr}[(p, p')] >= W_{prev}[(p, p')]$ $and$ $len_{T_r} = len_{prev}$ $and$ $\exists$ $W_{curr}[(p, p')] > W_{prev}[(p, p')]$}{
                        $S_{dom}[curr]$ $+= 1$\\
                        $S_{dom}[curr]$ $+= S_{dom}[prev]$\\
                        $remove$ $prev$ $from$ $S_{sky}[len_{prev}]$\\
                        $S_{sky}[len_{prev}] \leftarrow curr$
                    }
                    \ElseIf{for all counts $W_{curr}[(p, p')] <= W_{prev}[(p, p')]$ $and$ $len_{T_r} \leq len_{prev}$ $and$ $(\exists$ $W_{curr}[(p, p')] < W_{prev}[(p, p')]$ $or$ $len_{T_r} < len_{prev})$}{
                        $S_{dom}[prev]$ $+= 1$
                    }
                    \Else{
                        $S_{sky}[len_{prev}].append(curr)$
                    }
                    \If{$len_{T_r} > 1$}{
                        $len_{prev} \leftarrow len_{T_r} - 1$\\
                        \While{$\nexists$ $S_{sky}[len_{prev}]$ and $len_{prev} \geq 1$}{
                            $len_{prev}$ $-= 1$
                        }        
                        \If{for all counts $W_{curr}[(p, p')] >= W_{prev}[(p, p')]$}{
                            $S_{dom}[curr]$ $+= 1$\\
                            $S_{dom}[curr]$ $+= S_{dom}[prev]$\\
                            $remove$ $prev$ $from$ $S_{sky}[len_{prev}]$
                        }
                    }
                }
            }
        Reduce $T_r$\\
        $len_{T_r}$ $-= 1$\\
        }
    }
	\Return{$S_{sky}, S_{dom}$}
	\caption{{\sc Skyline-based Stability ($\cap$) Exploration}}
	\label{algo:SkyExStabMax}
\end{algorithm}

Thus, given a reference point $t_r$, we enumerate all possible $T_r$ to compute our candidate tuples. The order in which we enumerate the temporal elements depends on the semantics used. In particular, when the event $\gamma$ has increasing counts and the skyline consists of minimal temporal elements, then we start with the shortest temporal element and gradually extend it using either strict or loose semantics. Similarly, when $\gamma$ has decreasing counts and the skyline consists of maximal temporal elements, then we start with the longest temporal element and gradually extend it using either strict or loose semantics. 

The process repeats iteratively for each $t_r, T_r$ pair. First, the event graph $G_{\gamma}[(t_r, T_r)]$ is computed, and then 
is aggregated on $P$ producing $G_{\gamma}[(t_r, T_r), P]$. 
To compute the event graph and its aggregation, we leverage the methods introduced in the GraphTempo framework \cite{Tsoukanara23} that provides efficient algorithms for the implementation of all temporal operators and property aggregation differentiating between static and time-varying properties to handle both more efficiently.

Every derived tuple $(t_r, T_r, W)$ is considered as a candidate for the skyline and compared against previous candidate tuples. If it is dominated, it is ignored and we proceed with the next shorter or longer $T'_r$. Otherwise, $(t_r, T_r, W)$ is added to the skyline and any previously added tuple $(t'_{r}, T'_{r}, W')$ that it dominates is pruned from the skyline.

The process for skyline evaluation can be extended to support $top$-$k$ skyline evaluation if we evaluate and maintain the domination degree of each  $(t_r, T_r, W)$ that is added in our skylines. Initially, $dod(t_r, T_r, W)$ is set to 0. When $(t_r, T_r, W)$ is added to the skyline, if it substitutes other tuples already in the skyline, its $dod(t_r, T_r, W)$ is set to the sum of the $dod$s of all tuples it substitutes.

Algorithm \ref{algo:SkyExStabMax} presents in detail the implementation of the evolution skyline for the event of stability ($\cap$) where we are interested in maximal temporal elements. We next prove that Algorithm \ref{algo:SkyExStabMax} correctly computes the skyline, while similar proofs can be derived for other events and semantics.
\begin{lemma}
If a tuple belongs to the skyline,  Algorithm  \ref{algo:SkyExStabMax} will return it, and vice versa, any result returned by  Algorithm  \ref{algo:SkyExStabMax} belongs to the skyline.
\end{lemma}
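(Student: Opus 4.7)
The plan is to prove the biconditional by establishing two loop invariants that together yield soundness and completeness. Specifically, I maintain: (I1) at the end of each iteration of the inner \textbf{while} loop, every tuple stored in $S_{sky}$ is non-dominated with respect to every candidate generated so far; and (I2) every non-dominated candidate generated so far is present in $S_{sky}$. Since the outer \textbf{for} and \textbf{while} loops together enumerate every admissible pair $(t_r, T_r)$, and for each such pair $W$ is computed directly from the aggregated event graph via the $count$ identity of Proposition~1, the set of generated candidates is exactly the universe over which the skyline is defined. If (I1) and (I2) hold when the algorithm terminates, the returned $S_{sky}$ coincides with the true skyline.

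For the soundness direction (only dominated tuples are removed and only non-dominated ones remain), I would proceed by induction on the inner iteration. The base case is trivial since $S_{sky}=\emptyset$. For the inductive step, let $curr = (t_r, T_r, W_{curr})$ be the candidate generated in the current iteration. Because the algorithm iterates intervals in decreasing length for a fixed $t_r$ and stability under strict semantics is a decreasing event, $curr$ is dominated by a stored tuple $prev$ exactly when $prev$ has $len_{prev}\ge len_{T_r}$ and $W_{prev}[(p,p')] \ge W_{curr}[(p,p')]$ for all $(p,p')$ (with a strict inequality somewhere or a strict length gap). The first \textbf{for}-over-$prev$ block tests precisely this condition and discards $curr$; symmetrically it removes $prev$ when $curr$ dominates it. The second block (activated when $len_{T_r}>1$) handles tuples deposited under previously processed reference points whose lengths are strictly smaller than $len_{T_r}$, which $curr$ may now dominate. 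Thus (I1) is preserved.

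For completeness, suppose $r^{\star} = (t_r, T_r, W)$ lies in the true skyline. It is generated as a candidate at the unique iteration with those parameters. It cannot be discarded at that moment: any $prev \in S_{sky}[len_{prev}]$ triggering the discard branch would satisfy the domination predicate against $r^{\star}$, contradicting $r^{\star}$ being a skyline point. Nor can $r^{\star}$ be removed in a later iteration, since the removal conditions likewise certify that the removing tuple dominates it. By (I2), $r^{\star}$ persists in the final $S_{sky}$. The same argument adapts without change to the top-$k$ extension: each removal transfers $S_{dom}[prev]$ to the replacing tuple's counter, so $dod$ values accumulate correctly across substitutions and preserve the ranking.

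The main obstacle will be verifying that the bucketed scan over $S_{sky}$, which skips empty length buckets and advances monotonically first upward from $len_{T_r}$ to $len_{T_{r_{\max}}}$ and then downward from $len_{T_r}-1$ to $1$, genuinely inspects every stored tuple that could stand in a domination relation with $curr$. I would discharge this by noting that (i) domination between two tuples depends only on their length ordering and componentwise weight ordering, and (ii) the two directional sweeps exhaustively cover all lengths distinct from $len_{T_r}$, while the equal-length case is handled inside the $len_{prev} = len_{T_r}$ branch. Consequently no relevant comparison is missed, and the invariants (I1)--(I2) go through. Analogous proofs apply to the growth and shrinkage cases by reversing the iteration order and the direction of the length comparisons, as dictated by whether the count is increasing or decreasing under the chosen semantics.
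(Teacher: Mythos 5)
Your proof takes a genuinely different route from the paper's. The paper argues directly by contradiction on a single tuple: assuming a skyline tuple $s$ is not returned, it cases on which branch (lines 22--26, 27--28, or 35--38) must have pruned or blocked it and derives a domination relation that contradicts $s$ being non-dominated; the converse is handled symmetrically by asking whether the dominating tuple is examined before or after $s'$. You instead set up two loop invariants (soundness of what is stored, completeness of what is retained) and argue by induction over iterations, relying on the fact that the nested loops enumerate every admissible $(t_r,T_r)$. Both arguments ultimately rest on the same branch-level case analysis, and your invariant framing is the more systematic of the two; it also integrates the $dod$ bookkeeping for the top-$k$ extension more cleanly than the paper does.

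However, there is a genuine gap in the step you yourself flag as the main obstacle. You claim that the two directional sweeps ``exhaustively cover all lengths distinct from $len_{T_r}$,'' but that is not what the algorithm does: each \textbf{while} loop advances $len_{prev}$ only until it reaches the \emph{first} non-empty bucket (the smallest non-empty length at or above $len_{T_r}$ going up, the largest non-empty length below $len_{T_r}$ going down), and the subsequent comparisons are confined to that single bucket. A stored tuple sitting in a second or third non-empty bucket in either direction is never compared against $curr$, so the assertion that ``no relevant comparison is missed'' does not follow from what you wrote, and the preservation of (I1)/(I2) is not established. Closing this requires an extra argument --- for instance, that domination is transitive and the maintained buckets satisfy a monotonicity property guaranteeing that whenever some stored tuple dominates (or is dominated by) $curr$, a witness already lies in the nearest non-empty bucket --- or else retreating to the paper's coarser claim that the appropriate check ``would prune'' the tuple, which asserts rather than verifies bucket coverage. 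As written, your induction step does not go through.
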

\begin{proof}
Algorithm  \ref{algo:SkyExStabMax} considers  skylines with  maximal temporal elements. 
Assume for the purposes of contradiction, that there is a tuple $s = (t_r, T_r, W)$ that belongs to the skyline but is not returned by the algorithm.
Since, the algorithm considers for all reference points, all intervals, this means that $s$ was considered and it was either (a) not inserted in the skyline in lines 27-28
or (b) was pruned in lines 22-26 or in lines 35-38.
If it was not inserted in lines 27-28, it means
that is was dominated by a previous tuple with all counts and length at least equal, and either one of the counts larger or length larger, thus $s$ does not belong to the skyline, which is a contradiction. If it was pruned in lines 22-26, it means that the new tuple has all counts at least equal, one of the counts is larger and has the same length, thus $s$ does not belong to the skyline.
If it was pruned in lines 35-38, it means that the new tuple has all counts at least equal and longer interval, thus $s$ does not belong to the skyline.
Similarly, suppose a tuple $s'=(t'_r, T'_r, W') \notin S_{sky}$ returned by Algorithm \ref{algo:SkyExStabMax}. Then, it holds that $\exists$ $s = (t_r, T_r, W)$ that dominates $s'$. If $s$ is examined after $s'$, if $len_{T_r} = len_{T'_r}$, the check of lines 22-26 would prune $s'$ as $s$ would have all counts at least equal to those of $s'$ and at least one of them larger, or if $len_{T_r} > len_{T'_r}$ the check of lines 35-38 would prune $s'$ with counts of $s$ being at least equal to those of $s'$. Otherwise, if $s'$ is examined after $s$, lines 27-28 would prune $s'$. Thus, $s' \in S_{sky}$.
\end{proof}

\section{Experimental evaluation}

For our experiments, we use three real-world datasets.

\textit{\textbf{DBLP}}: a bibliographic network extracted from DBLP\footnote{https://dblp.uni-trier.de/}. The dataset contains 2 labels for nodes, \textit{author}, \textit{conference} and 2 labels for edges, \textit{collaborate}, \textit{publish}. The dataset covers a period of 21 years from 2000 to 2020, where each year denotes a time point. There is an edge of label collaborate when two authors co-author a paper in a year where the direction of the edge indicates the order of the authors of the paper. Also, there is an edge of label publish when an author publishes to a conference at a specific year. There are 2 properties for the authors, one static, \textit{gender} with 2 values, and one time-varying, \textit{\#publications} per year with 3 categorical values (low, average, high), and 2 properties for the conferences, one static, \textit{topic} with 16 values, and one time-varying, \textit{location} with 50 values. Gender is determined utilizing an appropriate tool\footnote{https://github.com/lead-ratings/gender-guesser}, topic is inferred from the name of the conference and location is derived from DBLP.

\textit{\textbf{MovieLens}}: a movie ratings datasets build on the benchmark MovieLens \cite{Harper16}. There are nodes and edges of a single label, \textit{user}, and \textit{co-rate}, respectively. The datasets covers a period of 6 months from May 1st to October 31st of 2000, and each month represents a time point. There is an edge of label co-rate between two users when they have both rated the same movie. The direction on edges denotes the precedence of a rating. Each user has 3 static properties, \textit{gender} with 2 values, \textit{age} with 7 categorical values, \textit{occupation} with 21 values, and 1 time-varying, \textit{average rating} of the user per month with 38 values.

\textit{\textbf{Primary School}}: a contact network describing the face-to-face proximity of 232 students and 10 teachers of a primary school in Lyon, France. The dataset has nodes and edges of a single label, that is \textit{individual} and \textit{interact}, respectively. Each interaction denotes a 20-second contact between two individuals. Each individual has 2 static properties, that is \textit{gender}, \textit{class}. The school has 5 grades, 1 to 5, with 2 classes each (i.e., 1A, 1B, 2A, 2B, etc) plus teachers, and 3 values for gender property, female, male, and unspecified. Our dataset covers a period of 17 hours.

Tables \ref{tab:dblp}, \ref{tab:ml}, and \ref{tab:pm} show the size of the graphs per time point for the three datasets.

Our methods are implemented in Python 3.9 and our experiments are conducted in a Windows 10 machine with Intel Core i5-2430, 2.40GHz processor and 8GB RAM. Our code and data are publicly available\footnote{https://github.com/etsoukanara/uskylinexplore}.

\begin{table*}
\caption{\textit{DBLP} property graph}
\label{tab:dblp}
\centering
\begin{adjustbox}{width=1\textwidth}
\begin{tabular}{c|c|ccccccccccccccccccccc}
\hline
\centering
{} & \textbf{Type} & 2000 & 2001 & 2002 & 2003 & 2004 & 2005 & 2006 & 2007 & 2008 & 2009 & 2010 & 2011 & 2012 & 2013 & 2014 & 2015 & 2016 & 2017 & 2018 & 2019 & 2020\\ \hline
\multirow{2}{*}{Nodes} & {author} & 1708 & 2165 & 1761 & 2827 & 3278 & 4466 & 4730 & 5193 & 5501 & 5363 & 6236 & 6535 & 6769 & 7457 & 7035 & 8581 & 8966 & 9660 & 11037 & 12377 & 12996\\
{} & {conference} & 15 & 18 & 20 & 19 & 24 & 25 & 25 & 25 & 23 & 27 & 27 & 29 & 25 & 29 & 27 & 28 & 28 & 32 & 35 & 34 & 28\\ \hline
\multirow{2}{*}{Edges} & {collaborate} & 2336 & 2949 & 2458 & 4130 & 4821 & 7145 & 7296 & 7620 & 8528 & 8740 & 10163 & 10090 & 11871 & 12989 & 12072 & 15844 & 16873 & 18470 & 21197 & 27455 & 28546\\
{} & {publish} & 1902 & 2402 & 2067 & 3188 & 3789 & 5266 & 5553 & 6140 & 6419 & 6445 & 7278 & 7874 & 7954 & 8891 & 8338 & 10205 & 10617 & 11538 & 13463 & 15203 & 16246\\
\hline
\end{tabular}
\end{adjustbox}
\end{table*}

\begin{table}
\caption{\textit{MovieLens} property graph}
\label{tab:ml}
\centering
\resizebox{6.5cm}{!}{%
\begin{tabular}{c|c|cccccc}
\hline
\centering
{} & \textbf{Type} & May & Jun & Jul & Aug & Sep & Oct\\ \hline
{Nodes} & user & 486 & 508 & 778 & 1309 & 575 & 498\\ \hline
{Edges} & co-rate & 100202 & 85334 & 201800 & 610050 & 77216 & 48516\\
\hline
\end{tabular}
}
\end{table}

\begin{table*}
\caption{\textit{Primary School} property graph}
\label{tab:pm}
\centering
\begin{adjustbox}{width=1\textwidth}
\begin{tabular}{c|c|ccccccccccccccccc}
\hline
\centering
{} & \textbf{Type} & 1 & 2 & 3 & 4 & 5 & 6 & 7 & 8 & 9 & 10 & 11 & 12 & 13 & 14 & 15 & 16 & 17\\ \hline
{Nodes} & individual & 228 & 231 & 233 & 220 & 118 & 217 & 215 & 232 & 238 & 235 & 235 & 236 & 147 & 119 & 211 & 175 & 187\\ \hline
{Edges} & interact & 857 & 2124 & 1765 & 1890 & 1253 & 1560 & 1051 & 1971 & 1170 & 1230 & 2039 & 1556 & 1654 & 1336 & 1457 & 1065 & 1767\\
\hline
\end{tabular}
\end{adjustbox}
\end{table*}

\subsection{Performance evaluation}

In the first set of experiments we study the execution time for the unified skyline exploration for the events of stability ($G_s(\cap)$, $G_s(\cup)$), growth ($G_g(\cup)$), and shrinkage ($G_h(\cup)$).
We use $\cup$ to denote loose semantics, and $\cap$ strict. We aggregate on various node labels and study the skylines for various edge labels.

In Figures \ref{fig:dblptime}, \ref{fig:mltime} and  \ref{fig:pstime}  we report the runtime for computing the skylines as
we increase the size of the graph, considering intervals up to a specific reference point for the \textit{DBLP}, \textit{MovieLens} and \textit{Primary School}.
We notice that events with loose semantics need longer time to execute compared to strict ones. Also, we observe that the most time-consuming event overall is shrinkage.

In Fig. \ref{fig:dblptime.1}, we study collaborations for aggregation by gender, while in Fig. \ref{fig:dblptime.3}  publications for aggregation by (gender, topic) for the  \textit{DBLP} dataset. When aggregating by gender, the resulting graph includes aggregate nodes for authors grouped by gender, while the rest of the nodes in the aggregated graph, i.e. conferences, remain the same as in the original graph. When aggregating by both gender for the authors and topic for the conferences, the entire graph consists of grouped nodes. Consequently, the aggregated graph produced in Fig. \ref{fig:dblptime.1} is significantly larger comparing to that in Fig. \ref{fig:dblptime.3}, resulting in higher execution time for all kinds of events, especially impacting $G_s(\cup)$.

Moving on to \textit{MovieLens}, Fig. \ref{fig:mltime} depicts the execution time for the exploration of co-rating when aggregating by gender (Fig. \ref{fig:mltime.1}), age (Fig. \ref{fig:mltime.2}), and (gender, age) (Fig. \ref{fig:mltime.3}). Despite \textit{MovieLens} having a significantly larger number of edges compared to \textit{DBLP}, the exploration of all events requires less time. This suggests that the number of time points is a crucial factor influencing the execution time of our algorithm. 

Finally, for \textit{Primary School}, Fig. \ref{fig:pstime} depicts the execution time for the interactions when aggregating by gender (Fig. \ref{fig:pstime.1}), class (Fig. \ref{fig:pstime.2}), and (gender, class) (Fig. \ref{fig:pstime.3}). Notably, the \textit{Primary School} graph proves to be faster to explore for any type of event. This is attributed to its smaller size with a limited number of nodes and edges.

In conclusion, our findings highlight that exploration time is closely linked to both the number of time points and the size of the graph.


\begin{figure*}
\centering
\begin{subfigure}[b]{0.5\textwidth}
\includegraphics[scale=0.21]{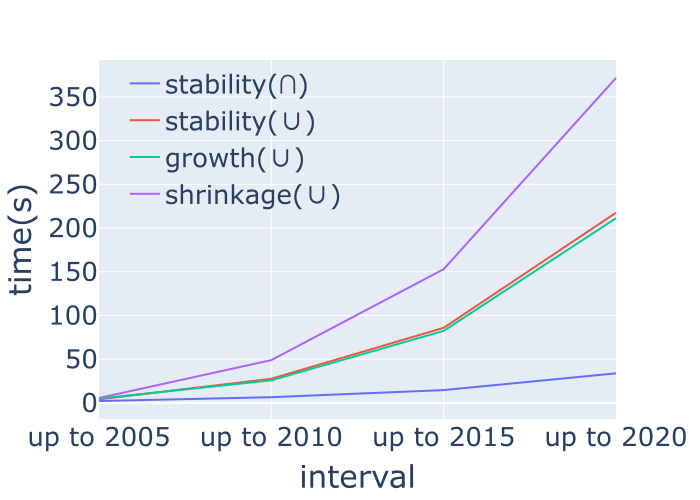}
\centering
\caption{}
\label{fig:dblptime.1}
\end{subfigure}%
\begin{subfigure}[b]{0.5\textwidth}
\includegraphics[scale=0.21]{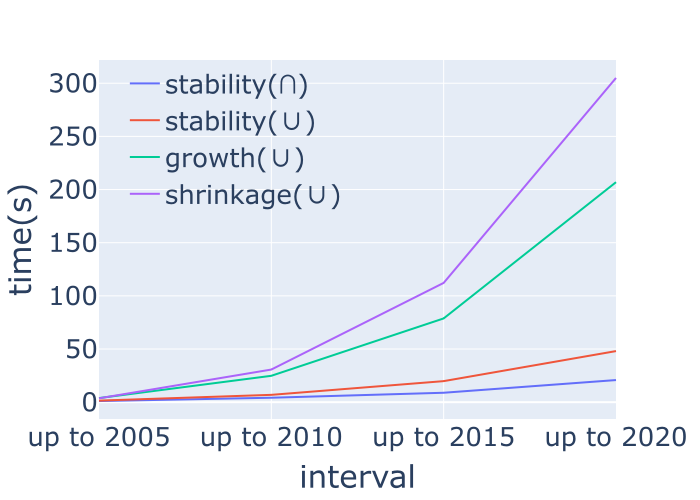}
\centering
\caption{}
\label{fig:dblptime.3}%
\end{subfigure}
\caption{Skyline exploration execution time for \textit{DBLP} for labels (a) collaborate and aggregation by gender, (b) publish and aggregation by (gender, topic).}
\label{fig:dblptime}
\end{figure*}

\begin{figure*}
\centering
\begin{subfigure}[b]{0.33\textwidth}
\includegraphics[scale=0.21]{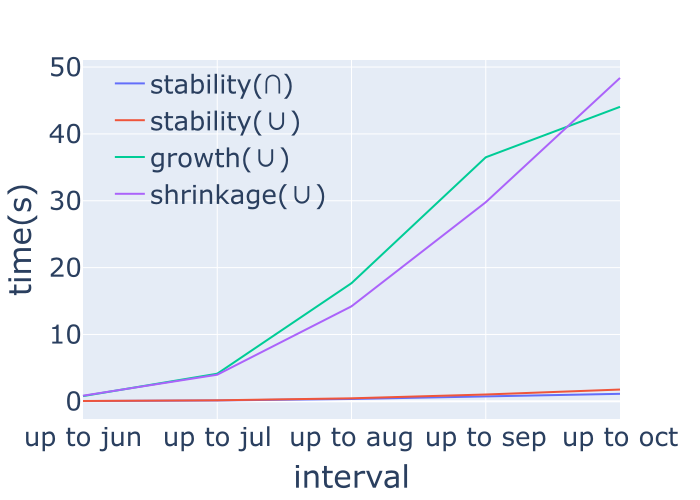}
\centering
\caption{}
\label{fig:mltime.1}
\end{subfigure}%
\begin{subfigure}[b]{0.33\textwidth}
\includegraphics[scale=0.21]{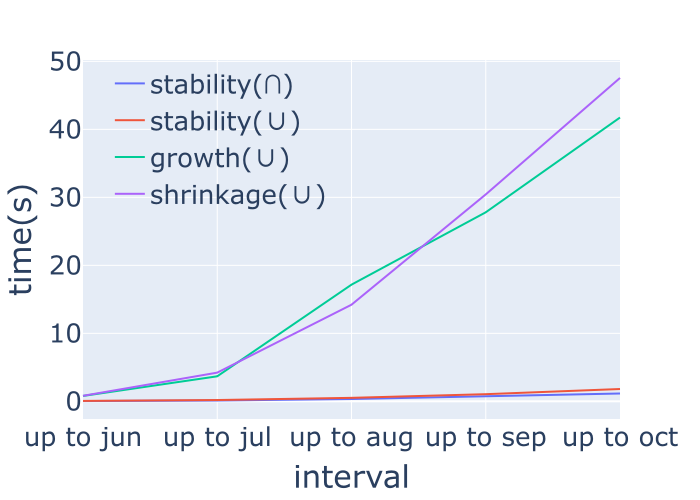}
\centering
\caption{}
\label{fig:mltime.2}%
\end{subfigure}
\begin{subfigure}[b]{0.33\textwidth}
\includegraphics[scale=0.21]{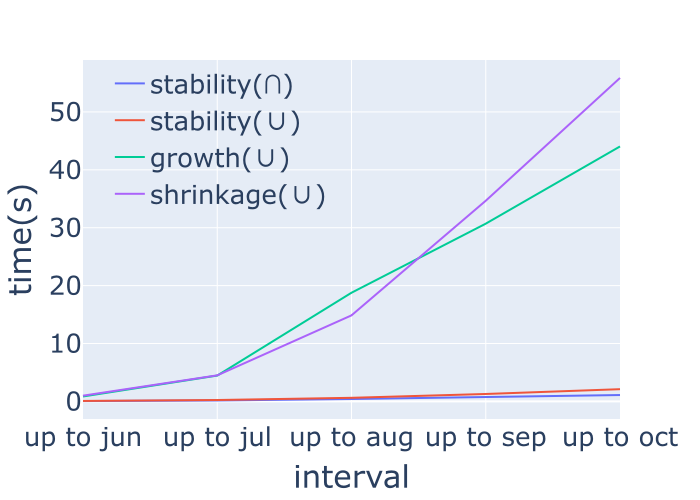}
\centering
\caption{}
\label{fig:mltime.3}%
\end{subfigure}
\caption{Skyline exploration execution time for \textit{MovieLens} for label co-rate and  aggregation by (a) gender, (b) age, (c) (gender, age).}
\label{fig:mltime}
\end{figure*}

\begin{figure*}
\centering
\begin{subfigure}[b]{0.33\textwidth}
\includegraphics[scale=0.21]{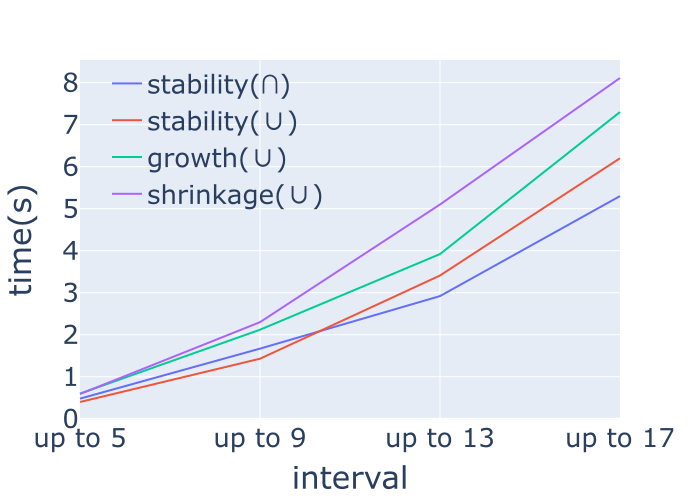}
\centering
\caption{}
\label{fig:pstime.1}
\end{subfigure}%
\begin{subfigure}[b]{0.33\textwidth}
\includegraphics[scale=0.21]{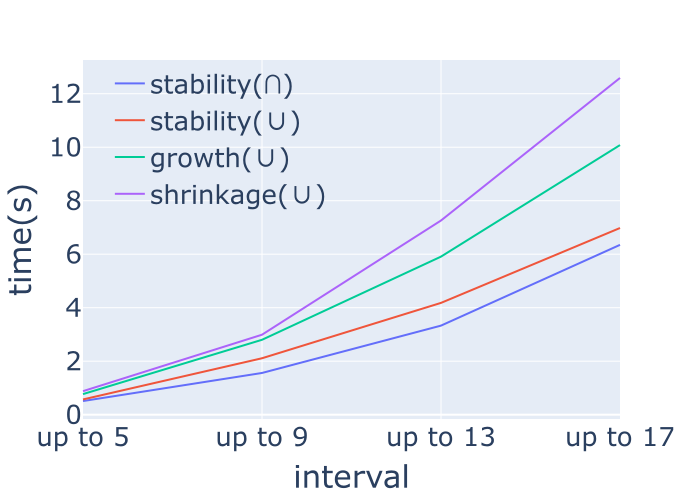}
\centering
\caption{}
\label{fig:pstime.2}%
\end{subfigure}
\begin{subfigure}[b]{0.33\textwidth}
\includegraphics[scale=0.21]{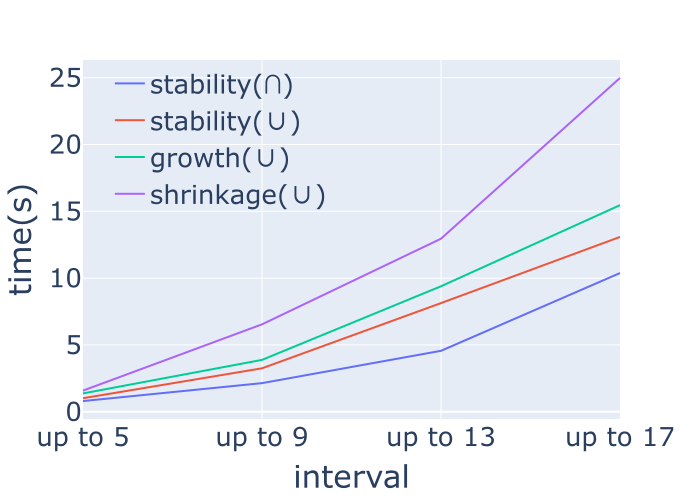}
\centering
\caption{}
\label{fig:pstime.3}%
\end{subfigure}
\caption{Skyline exploration execution time for \textit{Primary School} for label interact and aggregation by (a) gender, (b) class, (c) (gender, class).}
\label{fig:pstime}
\end{figure*}

\subsection{Qualitative evaluation}
In this set of experiments, we present qualitative results regarding the skyline-based exploration.

\paragraph{Skyline size.} First, let us look on the size of both individual and unified skylines.
In Table \ref{tab:skysize}, we report the size of the individual and the unified (denoted by *) skylines for $G_s(\cap)$, $G_s(\cup)$, $G_g(\cup)$, $G_h(\cup)$ for  \textit{DBLP}, \textit{MovieLens}, and \textit{Primary School}. Female and male values are represented as F and M, respectively. We notice that the skyline exploration that uses loose semantics generates larger skylines, probably because aggregation with loose semantics leads to larger graphs compared to the case of strict semantics. 

The size of the skyline also depends on the number of dimensions.
For example, the skyline for \textit{DBLP} and $G_s(\cap)$ has size 22 for the collaboration edges, where the number of dimensions is 5 (F-F, F-M, M-F, M-M, length), while for the publication edges the skyline size is 117 and the number of dimensions is 33. 
The number of time points also affects significantly the size of the skyline. 
For example, the sizes of the skylines for \textit{MovieLens} are in general  smaller than those for \textit{DBLP} even when the number of dimensions is large. For example,  for $G_g(\cup)$ and \textit{MovieLens} the skyline is of size 5 for co-rating edges and aggregation by gender with 5 dimensions, and of size 6 for aggregation by age with 50 dimensions. 

As a general observation, our findings are consistent with  Lemma \ref{lem4}, as we notice that the size for the unified skyline does not exceed the sum of the sizes of the individual skylines and also, it is larger than the size of any of the individual skylines.

Next, we present case studies of using skylines to explore graph evolution.

\paragraph{Individual skylines for stability.}


We zoom-in on individual skylines on gender property for \textit{Primary School}. The goal is to locate periods of stability for example to design mitigation strategies for disease containment. Figure \ref{fig:stab} presents the skyline output for the stable ($\cap$) interactions between girls (Fig. \ref{fig:stab.1}) and between boys (Fig. \ref{fig:stab.2}). 
There are longer stability intervals for boys than girls interactions. We observe that in the skylines for both girls and boys, intervals with length longer than 7 maintain at most 9 stable interactions, which can be seen as a bound on the duration and size of isolation bubbles. On the other hand, we observe an interval of duration 4 (interval [7, 11] at point of reference 12) with stability at least 32 for both girls and boys indicating a potential lower risk zone for disease spread.

\paragraph{Unified skylines for locating important time periods.}
After studying the individual skylines, we will examine the unified  skylines so that we can decide what is the most important period for each event. Since the unified skyline provides a high number of results, we will focus on the top-3 skylines based on \textit{dod}. In Table \ref{tab:topk}, we present the top-$3$ skyline results and the reported \textit{dod}, for the unified skyline exploration for each of our graph datasets and the events of $G_s(\cap)$, $G_s(\cup)$, $G_g(\cup)$, and $G_h(\cup)$. 

For \textit{DBLP}, we notice that the most important year for collaborations is 2020 where we notice high stability, growth, and shrinkage at the same time.
For \textit{MovieLens}, we notice that aug is the month with the greatest growth, as there is a significant increase in the number of edges for August (Table \ref{tab:ml}). For \textit{Primary School}, we notice that the 12th hour is the reference point with the highest stability ($G_s(\cap)$) for the aggregation by gender for the intervals [10, 11], [8, 11], and [7, 11], which seems reasonable as from 7th to 12th hour lessons are held and interactions between students are more stable comparing to breaks. Also, the 13th time point is the most important one for the event of growth for the aggregation by gender, as the 13th hour corresponds to break time where students interact with students from other classes as well.

\begin{table}
\caption{Skyline and unified skyline size}
\label{tab:skysize}
\centering
\begin{tabular}{c|ccccccc}
\hline
{} & {Aggregate} & {Edge type} & {$p-p'$} & {$G_s$($\cap$)} & {$G_s$($\cup$)} & {$G_g$($\cup$)} & {$G_h$($\cup$)}\\ \hline
\multirow{6}{*}{DBLP} & \multirow{5}{*}{gender} & \multirow{5}{*}{collaborate} & {F-F} & {$5$} & {$7$} & {$8$} & {$20$}\\
{} & {} & {} & {F-M} & {$8$} & {$13$} & {$14$} & {$20$}\\
{} & {} & {} & {M-F} & {$8$} & {$13$} & {$16$} & {$20$}\\
{} & {} & {} & {M-M} & {$12$} & {$19$} & {$19$} & {$20$}\\
{} & {} & {} & {*} & {$22$} & {$27$} & {$35$} & {$20$}\\
{} & {(gender, topic)} & {publish} & {*} & {$117$} & {$145$} & {$182$} & {$130$}\\ \hline
\multirow{7}{*}{ML} & \multirow{5}{*}{gender} & \multirow{7}{*}{co-rate} & {F-F} & {$3$} & {$3$} & {$5$} & {$5$}\\
{} & {} & {} & {F-M} & {$2$} & {$4$} & {$5$} & {$5$}\\
{} & {} & {} & {M-F} & {$2$} & {$4$} & {$5$} & {$5$}\\
{} & {} & {} & {M-M} & {$2$} & {$4$} & {$5$} & {$5$}\\
{} & {} & {} & {*} & {$3$} & {$4$} & {$5$} & {$5$}\\
{} & {age} & {} & {*} & {$10$} & {$15$} & {$6$} & {$7$}\\
{} & {(gender, age)} & {} & {*} & {$10$} & {$15$} & {$8$} & {$11$}\\ \hline
\multirow{6}{*}{PM} & \multirow{5}{*}{gender} & \multirow{6}{*}{interact} & {F-F} & {$10$} & {$9$} & {$12$} & {$13$}\\
{} & {} & {} & {F-M} & {$17$} & {$10$} & {$15$} & {$15$}\\
{} & {} & {} & {M-M} & {$13$} & {$9$} & {$13$} & {$15$}\\
{} & {} & {} & {*} & {$28$} & {$13$} & {$19$} & {$31$}\\
{} & {class} & {} & {*} & {$81$} & {$131$} & {$132$} & {$135$}\\
{} & {(gender, class)} & {} & {*} & {$103$} & {$130$} & {$131$} & {$136$}\\ \hline
\end{tabular}
\end{table}

\begin{figure*}[t]
\centering
\begin{subfigure}{0.49\textwidth}
\includegraphics[width=\textwidth]{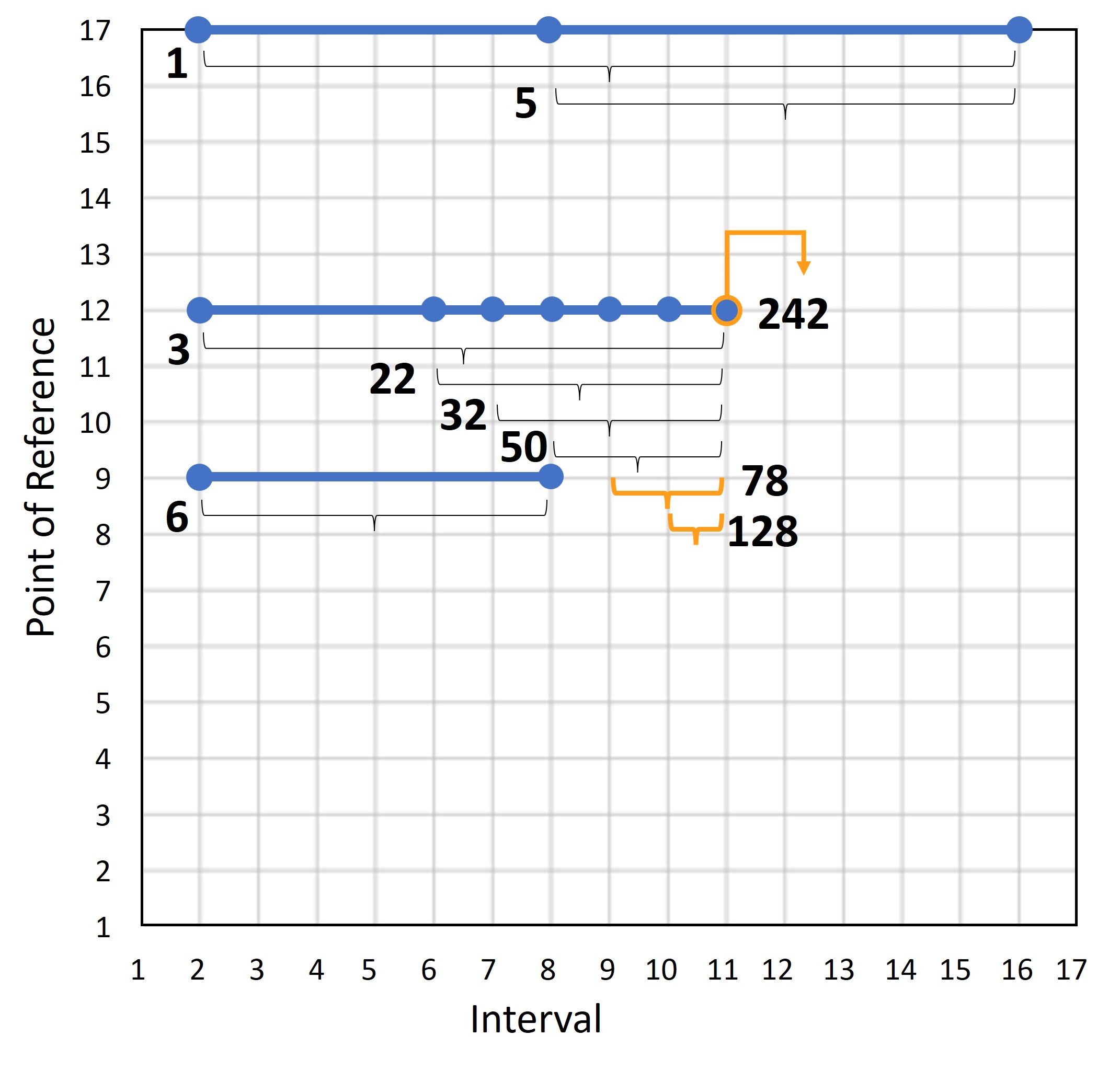}
\caption{Girls interactions}
\label{fig:stab.1}
\end{subfigure}
\begin{subfigure}{0.49\textwidth}
\includegraphics[width=\textwidth]{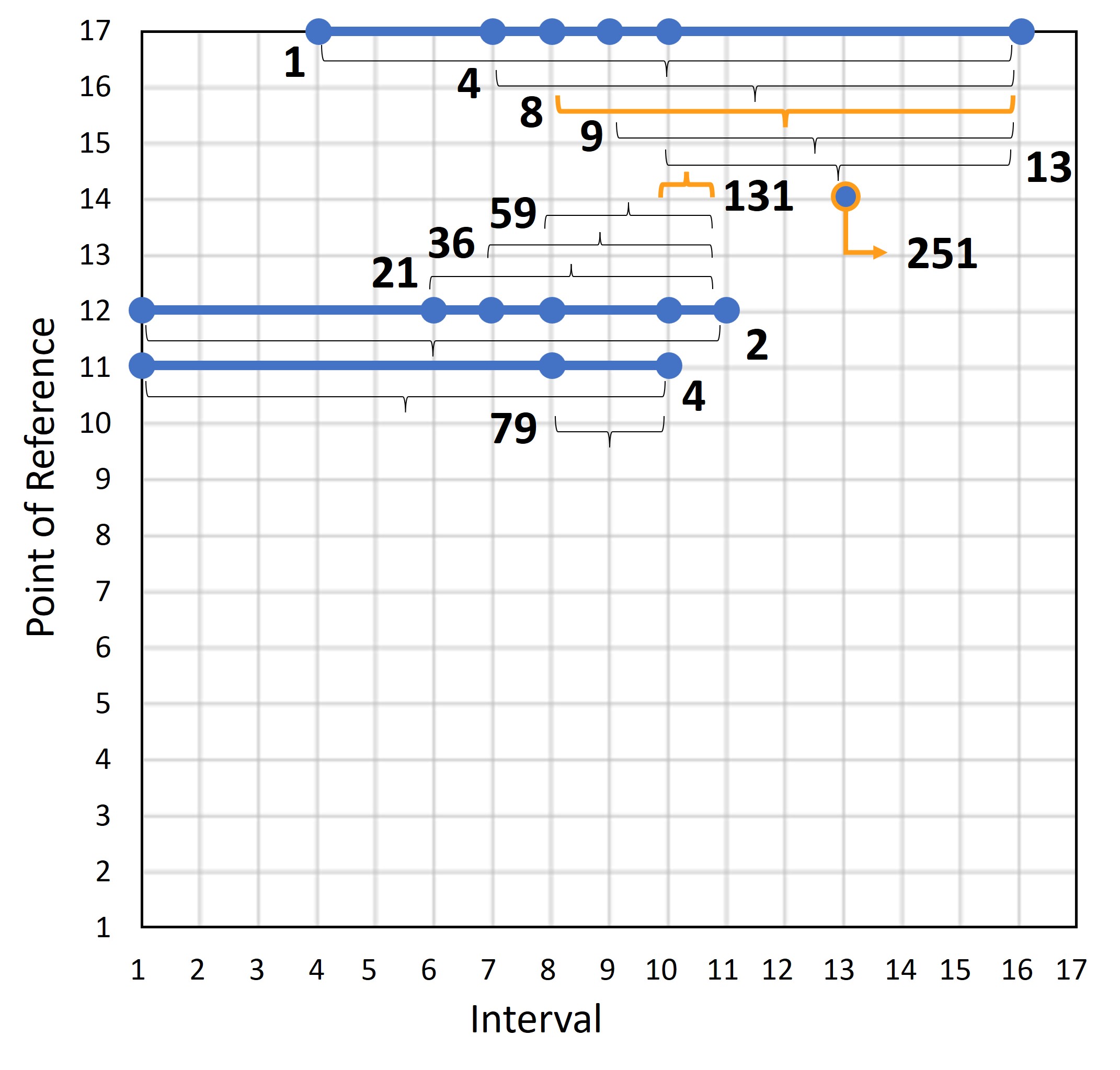}
\caption{Boys interactions}
\label{fig:stab.2}
\end{subfigure}
\caption{Skyline for the stability event ($\cap$) for (a) F-F and (b) M-M interactions for \textit{Primary School}.}
\label{fig:stab}
\end{figure*}

\begin{table}
\caption{Top-$3$ unified skyline}
\label{tab:topk}
\centering
\begin{tabular}{c|cc|cc}
\hline
\multicolumn{5}{c}{DBLP}\\ \hline
{} & {collaborate (gender)} &{dod} & {publish ((gender, topic))} & {dod}\\ \hline
\multirow{3}{*}{$G_s$($\cap$)} & {([2017], [2000,2016], 1)} & {44} & {([2018], [2008, 2017], 37)} & {8}\\
{} & {([2020], [2019], 2825)} & {19} & {([2017], [2008, 2016], 43)} & {6}\\
{} & {([2019], [2017, 2018], 645)} & {16} & {([2018], [2010, 2017], 67)} & {5}\\ \hline
\multirow{3}{*}{$G_s$($\cup$)} & {([2020], [2019], 2825)} & {19} & {([2018], [2016, 2017], 3467)} & {4}\\
{} & {([2020], [2018, 2019], 3688)} & {18} & {([2020], [2014, 2019], 6118)} & {3}\\
{} & {([2020], [2017, 2019], 3992} & {17} & {([2020], [2015, 2019], 5952)} & {3}\\ \hline
\multirow{3}{*}{$G_g$($\cup$)} & {([2019], [2000, 2018], 23475)} & {171} & {([2016], [2000, 2015], 6379)} & {7}\\
{} & {([2019], [2003, 2018], 23478)} & {2} & {([2018], [2013, 2017], 8941)} & {3}\\
{} & {([2020], [2001, 2019], 24117)} & {1} & {([2019], [2000, 2018], 9298)} & {2}\\ \hline
\multirow{3}{*}{$G_h$($\cup$)} & {([2020], [2019], 24630)} & {19} & {([2020], [2004, 2019], 82294)} & {4}\\
{} & {([2020], [2018, 2019], 42464)} & {18} & {([2019], [2004, 2018], 73725)} & {4}\\
{} & {([2020], [2017, 2019], 57659)} & {17} & {([2018], [2004, 2017], 66141)} & {4}\\ \hline
\multicolumn{5}{c}{ML}\\ \hline
{} & {co-rate (gender)} & {dod} & {co-rate (age)} & {dod}\\ \hline
\multirow{3}{*}{$G_s$($\cap$)} & {([sep], [aug], 957)} & {4} & {([sep], [aug], 957)} & {0}\\
{} & {([oct], [aug, sep], 30)} & {3} & {([aug], [jul], 573)} & {0}\\
{} & {([aug], [jul], 573)} & {0} & {([jul], [jun], 425)} & {0}\\ \hline
\multirow{3}{*}{$G_s$($\cup$)} & {([sep], [jul, aug], 1098)} & {6} & {([sep], [may, aug], 1185)} & {0}\\
{} & {([sep], [jun, aug], 1176)} & {4} & {([sep], [jun, aug], 1176)} & {0}\\
{} & {([sep], [aug], 957)} & {4} & {([sep], [jul, aug], 1098)} & {0}\\ \hline
\multirow{3}{*}{$G_g$($\cup$)} & {([aug], [may, jul], 609300)} & {9} & {([aug], [may, jul], 609300)} & {9}\\
{} & {([aug], [jun, jul], 609334)} & {4} & {([aug], [jun, jul], 609334)} & {4}\\
{} & {([sep], [may, aug], 76031)} & {4} & {([aug], [jul], 609477)} & {2}\\ \hline
\multirow{3}{*}{$G_h$($\cup$)} & {([sep], [aug], 609093)} & {4} & {([sep], [aug], 609093)} & {4}\\
{} & {([sep], [jul, aug], 810179)} & {3} & {([sep], [jul, aug], 810179)} & {3}\\
{} & {([sep], [jun, aug], 894867)} & {2} & {([sep], [jun, aug], 894867)} & {1}\\ \hline
\multicolumn{5}{c}{PS}\\ \hline
{} & {interact (gender)} & {dod} & {interact (class)} & {dod}\\ \hline
\multirow{3}{*}{$G_s$($\cap$)} & {([12], [10, 11], 513)} & {19} & {([11], [2, 10], 16)} & {9}\\
{} & {([12], [8, 11], 211)} & {19} & {([12], [2, 11], 12)} & {7}\\
{} & {([12], [7, 11], 138)} & {18} & {([12], [1, 11], 7)}& {6}\\ \hline
\multirow{3}{*}{$G_s$($\cup$)} & {([11], [6, 10], 1416)} & {79} & {([17], [2, 16], 1341)} & {1}\\
{} & {([11], [7, 10], 1383)} & {76} & {([16], [6, 15], 1000)} & {1}\\
{} & {([11], [8, 10], 1282)} & {72} & {([16], [8, 15], 993)} & {1}\\ \hline
\multirow{3}{*}{$G_g$($\cup$)} & {([13], [1, 12], 731)} & {134} & {([17], [1, 16], 356)} & {1}\\
{} & {([13], [2, 12], 732)} & {37} & {([14], [1, 13], 292)} & {1}\\
{} & {([13], [9, 12], 1112)} & {35} & {([16], [7, 15], 54)} & {1}\\ \hline
\multirow{3}{*}{$G_h$($\cup$)} & {([7], [2, 6], 3896)} & {28} & {([5], [2, 4], 3227)} & {1}\\
{} & {([5], [2, 4], 3121)} & {27} & {([16], [1, 15], 6101)} & {0}\\
{} & {([6], [2, 5], 3310)} & {21} & {([16], [2, 15], 6081)} & {0}\\ \hline
\end{tabular}
\end{table}

\paragraph{Comparing individual and unified skylines.}
After having studied the important periods as derived by the top-3 skylines for each event, we will examine the contribution of the individual skylines on the unified skyline. In other words, we will study the relevant importance of each individual skyline (i.e., each specific property value combination) in the results of the unified skyline. We will look into the most influencing property values for the top-3 skylines based on \textit{dod}.

In Fig. \ref{fig:topskydblp} and Fig. \ref{fig:topskyps}, we provide a visualization of the top-$3$ unified skylines for \textit{DBLP} and \textit{Primary school}, respectively.
To get more meaningful results, we have normalized the count of edges by dividing the count for each property value combination by the maximum reported count for this combination. Also, we have excluded from the results edges with zero count. Red-colored areas represent the 1st skyline result, green-colored areas the 2nd skyline result, and blue-colored areas the 3rd skyline result.

Figure \ref{fig:topskydblp.1} illustrates the top-$3$ results for the stable ($\cap$) publications for the aggregation on (gender, topic). We observe that (M, data engineering) and (M, pattern recognition) are the most important types of edges for the event of stability for all three results. Also, we  find only 4 results that contain  female authors, while 8 results for  male authors, which reflects that women are underrepresented in academic research through time. Regarding growth, in Fig. \ref{fig:topskydblp.2}, where we have excluded results with less than 80 edges (for presentation clarity), we observe that new publications in databases, data management, information retrieval and AI are the most significant ones regardless of gender, denoting the rapid growth of the research interest in those fields. 

Figure \ref{fig:topskyps} illustrates the top-$3$ skyline results for the event of stability ($\cap$) (Fig. \ref{fig:topskyps.1}) and growth ($\cup$) (Fig. \ref{fig:topskyps.2}) for the aggregation on class for \textit{Primary School}. In Fig. \ref{fig:topskyps.1} we observe that the reported results include only edges between individuals of the same class, which denotes that students of a class mainly interact and build stable contacts with students of the same class. The largest interval of stability is observed at the blue area for the 12th reference point and interval [1, 11] as reported in Table \ref{tab:topk}, which corresponds to class time. Also, the result in the red area is worth of further investigation as it achieves the highest counts on each of the reported edges. 
In Fig. \ref{fig:topskyps.2} for clarity of results, we have excluded results with less than 10 edges. The depicted results confirm our findings from Fig. \ref{fig:topskyps.1}, as we notice that new contacts occur mainly between students of different classes. Also, we notice that students of classes 1, 2 and 3 are the ones mainly participating in the results, implying that junior and mid-level students are more social compared to senior ones.

\begin{figure*}
\centering
\begin{subfigure}[b]{0.5\textwidth}
\includegraphics[scale=0.24]{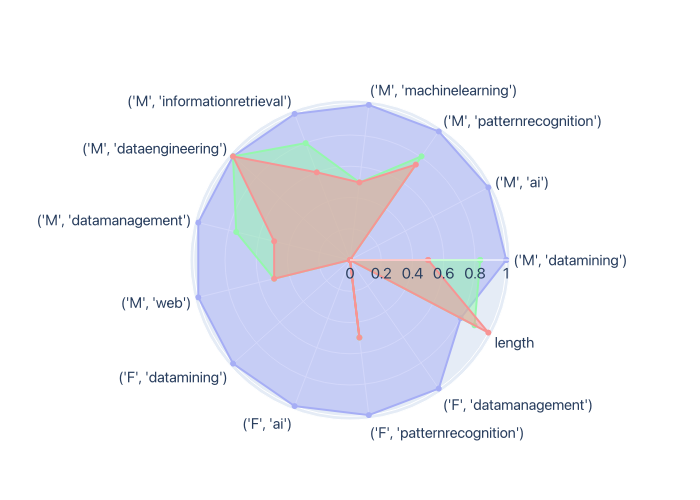}
\centering
\caption{}
\label{fig:topskydblp.1}
\end{subfigure}%
\begin{subfigure}[b]{0.5\textwidth}
\includegraphics[scale=0.24]{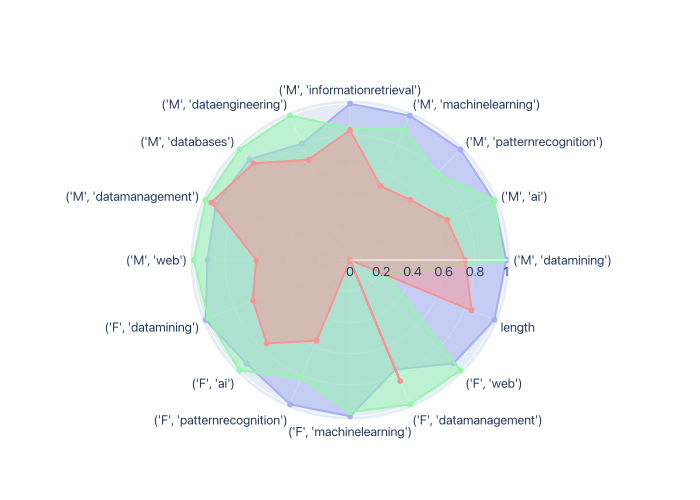}
\centering
\caption{}
\label{fig:topskydblp.2}%
\end{subfigure}
\caption{Top-$3$ skylines for (a) stable (stability ($\cap$)) and (b) new (growth ($\cup$)) publications for aggregation on (gender, topic) for \textit{DBLP}.}
\label{fig:topskydblp}
\end{figure*}

\begin{figure*}
\centering
\begin{subfigure}[b]{0.5\textwidth}
\includegraphics[scale=0.24]{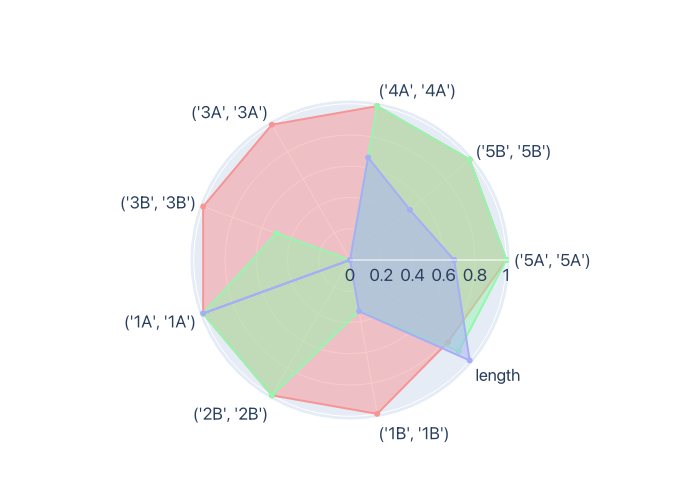}
\centering
\caption{}
\label{fig:topskyps.1}
\end{subfigure}%
\begin{subfigure}[b]{0.5\textwidth}
\includegraphics[scale=0.24]{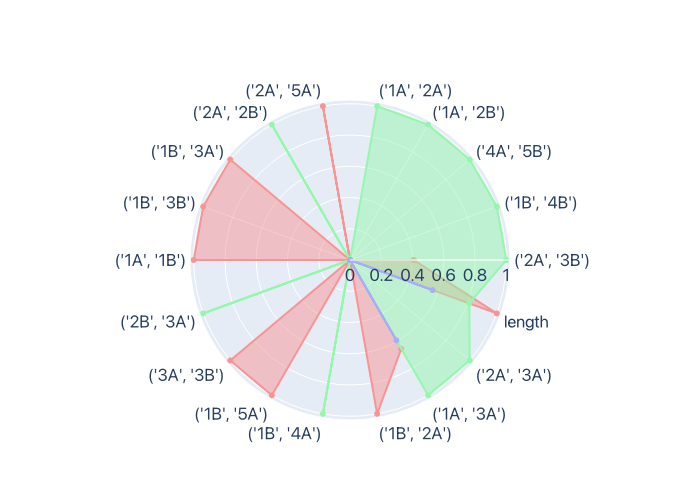}
\centering
\caption{}
\label{fig:topskyps.2}%
\end{subfigure}
\caption{Top-$3$ skylines for (a) stable (stability ($\cap$)) and (b) new (growth ($\cup$)) interactions for aggregation on class for \textit{Primary School}.}
\label{fig:topskyps}
\end{figure*}

\section{Related work}
In this paper, we have proposed exploration based on skylines, 
for identifying time periods that dominate other time periods in terms of increased activity (shrinkage, growth), or lack thereof (stability). To the best of our knowledge, this is a novel approach.



Regarding \textit{previous work on temporal graphs} and their evolution, there is some previous research in defining temporal operators and time and graph aggregation but provide no exploration strategies.
T-GQL \cite{Debrouvier21} is an extension of GQL 
with temporal operators to handle temporal paths; evolution is modeled through continuous and consecutive paths. In \cite{Orlando23} authors introduce TGV, a visualization query-driven tool based on T-GQL. The tool provides graph visualization for the type of paths presented in \cite{Debrouvier21}.
TGraph \cite{Moffitt17} uses temporal algebraic operators such as temporal selection for nodes and edges and traversal with temporal predicates on a temporal property graph. TGraph is extended with attribute and time aggregation that allows viewing a graph in different resolutions \cite{Aghasadeghi20} but only stability is studied.   GRADOOP \cite{Rost21,Rost22} introduces various extensions for supporting temporal property graphs such as temporal operators for grouping and pattern matching. The system provides different graph visualizations, i.e., the temporal graph view, the grouped graph view, and the difference graph view that illustrates new, stable and deleted elements between two graph snapshots. Unlike our work which facilitates a complete exploration strategy that concerns the history of the graph, the system is driven by user queries and provides no exploration strategy.

The GraphTempo framework \cite{Tsoukanara23} defines temporal operators and supports both time and attribute aggregation for temporal attributed graphs. An evolution graph is built by overlaying an intersection and two difference graphs and interesting events in intervals are detected based on a threshold-based strategy. An interactive tool for exploring the graph, its aggregations and the evolution graph is also provided \cite{TsoukanaraKP23a}. While our approach relies on the GraphTempo model,
here we focus on property graphs. 
Furthermore, the GraphTempo exploration strategy  requires selecting appropriate thresholds, our proposed skyline-based approach detects all events effectively without the need for parameter configuration.

Another approach to temporal graphs, that does not utilize temporal operators, is versioning (i.e., maintaining previous graph snapshots). In the EvOLAP graph \cite{Guminska18}, versioning is used both for attributes and graph structure to enable analytics on changing graphs. In \cite{Ghrab13}, explicit labeling of graph elements is designed to support analytical operations over evolving graphs, and particularly time-varying attributes, while in \cite{Andriamampianina22} the designed conceptual model captures changes on the topology, the set of attributes and the attributes’ values.


There has been a lot of \textit{previous work on skylines}. Since its introduction \cite{Borzsonyi01}, the skyline operator has been utilized in several domains to identify dominating entities in multi-criteria selection problems \cite{Kalyvas17}. 
Although skyline queries are very popular for multi-dimensional data, there is not much work on skylines  over graphs. 
A domain where skylines were  first used is road networks
 where the best detours based on a given route \cite{Huang04} or the best places to visit \cite{Jang08} are detected using distances among other possible criteria. In \cite{Kriegel10} skylines of routes based on multiple criteria, such as distance, and cost are also defined. The network is modeled as an multi-attribute graph and a vector of different optimization criteria is stored for each edge. In \cite{Chowdhury19}, authors explore the concept of skyline path queries in the context of location-based services, where given a pick-up point and a destination point the system applies skyline queries based on a set of features so as to pertain the most useful routes. 
 
Besides road networks,  skylines have been defined for graphs using the shortest path distances between nodes \cite{Zou10}. Specifically,
given a set of query nodes, a node $u$ dominates a node $v$, if $u$ is at least as close as $v$ to all query points and $u$ is closer than $v$ to at least one query point. A different approach that does not rely on  distances is defined in \cite{Weiguo16}, where a skyline consists of subgraphs that best match a given user query, also represented as a subgraph. Matching relies on isomorphisms and uses appropriate encoding schemes that capture both structural and numeric features of the graph nodes. Finally, skylines on knowledge graphs are defined in \cite{Keles19}. The focus is on supporting skyline queries over entities in an RDF graph through SPARQL queries and the efficient evaluation of such queries, but while the data are modeled as a graph, skylines are defined on node attributes and do not take into account graph structure.

Finally, this paper is an extended version of \cite{TsoukanaraADBIS23}.
While in \cite{TsoukanaraADBIS23}, we consider temporal attributed graphs,
in this paper, we focus on temporal property graphs and present a new form
of aggregation. Furthermore, we extend individual evolution skylines to unified evolution skylines and offer an extensive quantitative and qualitative evaluation.

\section{Conclusions}

We introduced a unified skyline-based exploration strategy to detect interesting intervals when a significant number of events occur in the evolution of a temporal property graph. We considered aggregate graphs and explored stability, growth and shrinkage, considering all value combinations for a given set of properties at the same time. We experimentally evaluated the efficiency and effectiveness of our approach and presented results on three real datasets. 

As a future work we plan to explore efficient strategies in limiting the number of property values taking into consideration the most influencing ones. We also plan to integrate our approach for graph aggregation and skyline-based exploration in an existing property graph database. Finally, since in many cases, properties have missing or erroneous values, another direction for future work is to extend aggregation to address this issue and also study its effect in the analysis of temporal evolution.


\section*{Declarations}

\subsection*{Ethics Approval and Consent to Participate}
Not applicable.

\subsection*{Consent for Publication}
Not applicable.

\subsection*{Availability of Data and Materials}
The datasets generated and/or analysed during the current study are available in the etsoukanara repository, \url{https://github.com/etsoukanara/uskylinexplore}.

\subsection*{Competing Interests}
The authors declare that they have no competing interests.

\subsection*{Funding}
Research work supported by the Hellenic Foundation for Research and Innovation (H.F.R.I.) under the “1st Call for H.F.R.I. Research Projects to Support Faculty Members \& Researchers and Procure High-Value Research Equipment” (Project Number: HFRI-FM17-1873, GraphTempo).

\subsection*{Authors' Contributions}
The study was designed by Evangelia Tsoukanara, Georgia Koloniari and Evaggelia Pitoura. Material preparation and data collection were performed by Evangelia Tsoukanara. The data analysis was performed by Evangelia Tsoukanara, and Georgia Koloniari and Evaggelia Pitoura provided additional expertise. The first draft of the manuscript was written by Evangelia Tsoukanara and all authors commented on previous versions of the manuscript. All authors read and approved the final manuscript.

\subsection*{Acknowledgements}
Not applicable.

%




\bibliography{bib}
\bibliographystyle{spmpsci}

\end{document}